\def\tr{{\rm tr}}
\newcommand{\norm}[1]{\lVert #1\rVert}
\newcommand{\abs}[1]{|#1|}
\newtheorem{theorem}{Theorem}
\newtheorem{assu}{Assumption}
\newtheorem{lemma}{Lemma}
\title{High Dimensional Sparse Canonical Correlation Analysis for Elliptical Symmetric Distributions}
\author[1]{Chengde Qian\thanks{These authors are co-first author and contributed equally to this work.}}
\author[2]{Yanhong Liu$^*$}
\author[3]{Long Feng\thanks{Corresponding author. Email: flnankai@nankai.edu.cn}}
\affil[1]{School of Mathematical Sciences, Shanghai Jiao Tong University}
\affil[2]{Guangzhou Institute of International Finance, Guangzhou University}
\affil[3]{School of Statistics and Data Science, KLMDASR, LMPC and LEBPS, Nankai University}
\date{\today}
\begin{document}

\maketitle

\begin{abstract}
This paper proposes a robust high-dimensional sparse canonical correlation analysis (CCA) method for investigating linear relationships between two high-dimensional random vectors, focusing on elliptical symmetric distributions. Traditional CCA methods, based on sample covariance matrices, struggle in high-dimensional settings, particularly when data exhibit heavy-tailed distributions. To address this, we introduce the spatial-sign covariance matrix as a robust estimator, combined with a sparsity-inducing penalty to efficiently estimate canonical correlations. Theoretical analysis shows that our method is consistent and robust under mild conditions, converging at an optimal rate even in the presence of heavy tails. Simulation studies demonstrate that our approach outperforms existing sparse CCA methods, particularly under heavy-tailed distributions. A real-world application
%, including finance and biology, 
further confirms the method’s robustness and efficiency in practice. Our work provides a novel solution for high-dimensional canonical correlation analysis, offering significant advantages over traditional methods in terms of both stability and performance.

{\it Keywords}: Canonical correlation analysis, Elliptical symmetric distributions, High dimensional data, Spatial-sign
\end{abstract}

\section{Introduction}

Canonical correlation analysis (CCA) is a fundamental multivariate statistical technique that explores the linear relationships between two sets of variables. It has been widely applied across diverse fields such as biomedical research, neuroimaging, and genomics \citep{Hardoon2004,Chi2013,Safo2018}. By identifying maximally correlated linear combinations between paired datasets, CCA serves as a powerful tool for uncovering complex cross-domain associations and facilitating integrative data analysis.

Despite its wide applicability, classical sample covariance-based CCA often faces substantial limitations in modern data settings. Two primary challenges hinder its effectiveness: the high dimensionality of contemporary datasets and deviations from the multivariate normality assumption. In high-dimensional scenarios—where the number of variables exceeds, or is comparable to, the sample size—the sample covariance matrices involved in CCA become ill-conditioned or singular, rendering traditional CCA unstable or even inapplicable \citep{Hardoon2004,Guo2016}. Moreover, when the underlying data distributions deviate from normality, as is common in genomics or financial data, the performance of standard CCA deteriorates due to its sensitivity to outliers and heavy tails.

To address these issues, a growing body of research has proposed robust and regularized extensions of CCA that incorporate sparsity assumptions, shrinkage techniques, such as \cite{Gonzalez2008,Parkhomenko2009,Witten2009b,Chen2011,Chi2013,Cruz-Cano2014,Gao2015,Wilms2015,Gao2017,Safo2018}. These modern adaptations aim to improve estimation accuracy, enhance interpretability, and ensure reliable inference in high-dimensional settings. As one of the most popular sparse CCA methods, \cite{Witten2009b} proposed the penalized matrix decomposition (PMD) that replaces the sample covariance matrix of two random vectors with identity matrices to avoid singularity. However, \cite{Chen2017} showed that the sparse CCA directions from PMD may be inconsistent when the covariance matrix of these two vectors are far from diagonal. So they relaxed the diagonal assumption by assuming the sparsity of the covariance matrix. \cite{Gao2017} proposed a convex programming with group lasso refinement, which does not impose any assumption on the covariance matrix or precision matrix and achieves the minimax estimation risk \citep{Gao2015}. \cite{Mai2019} proposed an iterative penalized least square approach to sparse CCA, which does not need the sparse assumption of the covariance matrix, either.

To address the limitations of classical CCA under nonnormal data, various improved methods have been developed. Nonparametric approaches, such as kernel CCA \citep{Hardoon2004}, leverage kernel techniques to capture nonlinear associations without distributional assumptions. On the other hand, parametric and semiparametric methods offer probabilistic frameworks for CCA. For example, \cite{Zoh2016} proposed a probabilistic CCA tailored for count data using the Poisson distribution. \cite{Agniel2017} introduced a semiparametric normal transformation model for analyzing mixed-type variables, which involves nonparametric maximum likelihood estimation of marginal transformation functions.  In addition, there exists a substantial body of research on robust covariance matrix estimators designed to be resistant to outliers and heavy-tailed distributions, which can enhance the reliability of classical CCA based on Pearson correlations. Notable examples include the minimum covariance determinant (MCD) \citep{Rousseeuw1984}, the S-estimator \citep{Lopuhaa1989}, and Tyler’s M-estimator \citep{Tyler1987}. Several studies have explored incorporating these robust covariance estimators into CCA frameworks or replacing Pearson correlation with more robust association measures. Many of these approaches rely on eigen-decomposition of robust covariance or correlation matrices \citep{Alfons2017,Branco2005,Taskinen2006,Visuri2003}; however, their applicability may be limited when the data lack finite moments, posing challenges for consistency and interpretation. \cite{Yoon2020} derived rank-based estimator instead of the sample correlation matrix within the sparse CCA framework motivated by \cite{Chi2013} and \cite{Wilms2015}.

To overcome these challenges of high dimension and non-normality simultaneously, we propose a robust high-dimensional sparse canonical correlation analysis (SCCA) method, based on the spatial-sign covariance matrix \citep{oja2010multivariate}. The spatial-sign covariance matrix, which is a robust estimator for covariance, is well-suited for elliptical symmetric distributions and offers superior performance in high-dimensional settings with heavy-tailed distributions. \cite{raninen2021linear}, \cite{raninen2021bias}, and \cite{ollila2022regularized} proposed a series of linear shrinkage estimators based on spatial-sign covariance matrices and showed their advantages over the existing methods based on sample covariance matrix. \cite{feng2024spatial} considered high dimensional sparse principal component analysis via spatial-sign covariance matrix.  \cite{lu2025robust} also proposed a high dimensional precision matrix estimator based on spatial-sign covariance matrix and applied it to elliptical graphic model and linear discriminant analysis. 

In this article, we propose a spatial-sign based sparse canonical correlation analysis via $l_1$ penalty. We establish the theoretical properties of the proposed method, demonstrating its consistency and robustness under mild regularity conditions. Specifically, we show that the new estimator converges at an optimal rate and remains robust to deviations from normality, such as the presence of heavy tails in the data. The proposed method is compared with existing sparse CCA techniques, which typically rely on Gaussian or sub-Gaussian assumptions. Simulation studies confirm that our method outperforms these alternatives, especially under heavy-tailed conditions, in terms of both accuracy and stability. Moreover, we apply our method to a real-world data set. %including examples from finance and biology, where the data often exhibit heavy-tailed characteristics. 
The results further illustrate the robustness and efficiency of our approach in practical scenarios.

The remainder of the article is organized as follows. Section 2 introduces the spatial-sign-based sparse CCA method and presents its theoretical properties. Section 3 provides simulation studies to evaluate the performance of the proposed approach, and Section 4 illustrates its practical utility through a real data application. All technical details and proofs are provided in the Appendix.

\section{Spatial-sign based sparse canonical correlation analysis}

Let $\boldsymbol{X}_1 \in \mathbb{R}^{p_1}$ and $\boldsymbol{X}_2 \in \mathbb{R}^{p_2}$ be two random vectors with covariance matrices $\mathbf{\Sigma}_1 = \operatorname{cov}(\boldsymbol{X}_1)$, $\mathbf{\Sigma}_2 = \operatorname{cov}(\boldsymbol{X}_2)$, and cross-covariance matrix $\mathbf{\Sigma}_{12} = \operatorname{cov}(\boldsymbol{X}_1, \boldsymbol{X}_2)$. Classical CCA aims to identify linear projections $\boldsymbol{w}_1^\top \boldsymbol{X}_1$ and $\boldsymbol{w}_2^\top \boldsymbol{X}_2$ that achieve the highest possible correlation \citep{Hotelling1936}. Formally, it solves the optimization problem:
%\max_{\boldsymbol{w}_1, \boldsymbol{w}_2} \;
\begin{align}\label{cca}
\underset{\boldsymbol{w}_1, \boldsymbol{w}_2}{\operatorname{maximize}}\left\{\boldsymbol{w}_1^\top \mathbf{\Sigma}_{12} \boldsymbol{w}_2\right\} \quad \text{subject to} \quad \boldsymbol{w}_1^\top \mathbf{\Sigma}_1 \boldsymbol{w}_1 = 1, \quad \boldsymbol{w}_2^\top \mathbf{\Sigma}_2 \boldsymbol{w}_2 = 1,
\end{align}
with a closed-form solution via the singular value decomposition of the normalized cross-covariance matrix $\mathbf{\Sigma}_1^{-1/2} \mathbf{\Sigma}_{12} \mathbf{\Sigma}_2^{-1/2}$.

In practice, these population quantities are estimated by their sample counterparts. However, in high-dimensional scenarios where the number of variables exceeds the sample size, the sample covariance matrices $\hat{\mathbf{\Sigma}}_1$ and $\hat{\mathbf{\Sigma}}_2$ become singular, rendering the classical solution unstable or undefined. To address this, sparse CCA formulations introduce $\ell_1$ penalties on $\boldsymbol{w}_1$ and $\boldsymbol{w}_2$ to induce sparsity and avoid overfitting \citep{Parkhomenko2009,Witten2009b,Chi2013,Wilms2015}:
%\max_{\boldsymbol{w}_1, \boldsymbol{w}_2} \; 
\[\underset{\boldsymbol{w}_1, \boldsymbol{w}_2}{\operatorname{maximize}}\left\{
\boldsymbol{w}_1^\top \hat{\mathbf{\Sigma}}_{12} \boldsymbol{w}_2 - \lambda_1 \|\boldsymbol{w}_1\|_1 - \lambda_2 \|\boldsymbol{w}_2\|_1\right\} \quad \text{subject to} \quad \boldsymbol{w}_1^\top \hat{\mathbf{\Sigma}}_1 \boldsymbol{w}_1 \leq 1, \quad \boldsymbol{w}_2^\top \hat{\mathbf{\Sigma}}_2 \boldsymbol{w}_2 \leq 1.
\]
The use of inequality constraints ensures convexity, which is advantageous for optimization.

Despite its success in high-dimensional contexts, this framework still relies on second-order moment assumptions and can perform poorly when data exhibit heavy tails or outliers. This limitation has motivated recent developments that replace the sample covariance matrices with robust alternatives—such as rank-based \citep{Yoon2020} or spatial-sign-based estimators—allowing for improved performance in non-Gaussian environments.

Suppose $\boldsymbol{X}=(\boldsymbol{X}_1^\top, \boldsymbol{X}_2^\top)^{\top}$ are generated from the elliptical symmetric distribution $\boldsymbol{X}\sim E_p(\boldsymbol{\mu},\mathbf{\Sigma},r)$, i.e.
\begin{align}\label{esd}
\boldsymbol{X}=\boldsymbol{\mu}+r\mathbf{\Gamma}\boldsymbol{u},
\end{align}
where $\boldsymbol{u}$ is uniformly distributed on the sphere $\mathbb{S}^{p-1}$ and $r$ is a scalar random variable with $E(r^2)=p$ and independent with $\boldsymbol{u}$. So the covariance matrix of $\boldsymbol{X}$ is $\mathbf{\Sigma}=\mathbf{\Gamma}\mathbf{\Gamma}^\top$.
The spatial-sign covariance matrix \(\mathbf{S}\) is defined as:
\[\mathbf{S} = E\left(U(\boldsymbol{X} - \boldsymbol{\mu})U(\boldsymbol{X} - \boldsymbol{\mu})^{\top}\right),\quad\text{where}\quad U(\boldsymbol{x})=\frac{\boldsymbol{x}}{\|\boldsymbol{x}\|}I(\boldsymbol{x}\neq 0).\] 

Suppose we observe a set of independent and identically distributed samples $\{\boldsymbol{X}_1, \ldots, \boldsymbol{X}_n\}$ drawn from the model specified in equation (\ref{esd}). Then, the corresponding sample spatial-sign covariance matrix $\hat{\mathbf{S}}$ is defined as $$\hat{\mathbf{S}}=\frac{1}{n}\sum_{i=1}^n U(\boldsymbol{X}_i-\hat{\boldsymbol{\mu}})U(\boldsymbol{X}_i-\hat{\boldsymbol{\mu}})^\top$$ where $\hat{\boldsymbol{\mu}}$ is the sample spatial-median, i.e.
\begin{align*}
\hat{\boldsymbol{\mu}}=\arg\min_{\boldsymbol{\mu} \in \mathbb{R}^p} \sum_{i=1}^n \|\boldsymbol{X}_i-\boldsymbol{\mu}\|_2.
\end{align*}
Let $\boldsymbol{X}_i=(\boldsymbol{X}_{i1}^\top,\boldsymbol{X}_{i2}^\top)^\top$ and define
\begin{align*}\
\hat{\mathbf{S}}=\left(
\begin{array}{cc}
\hat{\mathbf{S}}_1&\hat{\mathbf{S}}_{12}\\
\hat{\mathbf{S}}_{21} &\hat{\mathbf{S}}_2
\end{array}
\right)
\end{align*}
where $\hat{\mathbf{S}}_1\in \mathbb{R}^{p_1\times p_1}, \hat{\mathbf{S}}_2\in \mathbb{R}^{p_2\times p_2}$ and $\hat{\mathbf{S}}_{12}\in \mathbb{R}^{p_1\times p_2}$. 

As shown in Lemma 6 in \cite{lu2025robust}, the covariance matrix $\mathbf{\Sigma}$ could be well approximated by $\tr(\mathbf{\Sigma})\mathbf{S}$ as the dimension goes to infinity. So we replace the sample covariance matrix with the sample spatial-sign covariance matrix and reformulate the problem as
\begin{align}\label{sscca}
\underset{\boldsymbol{w}_1, \boldsymbol{w}_2}{\operatorname{maximize}}\left\{\boldsymbol{w}_1^{\mathrm{T}} p\hat{\mathbf{S}}_{12} \boldsymbol{w}_2-\lambda_1\left\|\boldsymbol{w}_1\right\|_1-\lambda_2\left\|\boldsymbol{w}_2\right\|_1\right\} \quad \text { subject to } \quad \boldsymbol{w}_1^{\mathrm{T}} p\hat{\mathbf{S}}_1 \boldsymbol{w}_1 \leqslant 1, \quad \boldsymbol{w}_2^{\mathrm{T}} p\hat{\mathbf{S}}_2 \boldsymbol{w}_2 \leqslant 1 .
\end{align}
Here, it is unnecessary to estimate $\operatorname{tr}(\mathbf{\Sigma})$ because scaling the linear combinations by positive constants does not affect the canonical correlation. Specifically, for any $c_1 > 0$ and $c_2 > 0$, we have  
$$
\mathrm{cor}(c_1 \boldsymbol{w}_1^\top \boldsymbol{X}_1, c_2 \boldsymbol{w}_2^\top \boldsymbol{X}_2) = \mathrm{cor}(\boldsymbol{w}_1^\top \boldsymbol{X}_1, \boldsymbol{w}_2^\top \boldsymbol{X}_2).
$$
So, without loss of generality, we directly assume that $\tr(\mathbf{\Sigma}_1)=p_1$ and $\tr(\mathbf{\Sigma}_2)=p_2$. We refer to the estimation procedure obtained by solving (\ref{sscca}) as Spatial-Sign based Sparse Canonical Correlation Analysis, abbreviated as SSCCA.

The tuning parameters $\lambda_1,\lambda_2$ are selected similar to the BIC method proposed in \cite{Yoon2020}. Define 
$$
f\left(\tilde{\boldsymbol{w}}_1\right)=\tilde{\boldsymbol{w}}_1^{\mathrm{T}} p\hat{\mathbf{S}}_1 \tilde{\boldsymbol{w}}_1-2 \tilde{\boldsymbol{w}}_1^{\mathrm{T}} p\hat{\mathbf{S}}_{12} \boldsymbol{w}_2+\boldsymbol{w}_2 p\hat{\mathbf{S}}_2 \boldsymbol{w}_2
$$
for the residual sum of squares. Furthermore, motivated by the performance of the adjusted degrees of freedom variance estimator in \cite{Reid2016}, we also propose the following two criteria
$$
\operatorname{BIC}_1=f\left(\tilde{\boldsymbol{w}}_1\right)+\mathrm{df}_{\tilde{\boldsymbol{w}}_1} \frac{\log n}{n}, \quad \operatorname{BIC}_2=\log \left\{\frac{n}{n-\mathrm{df}_{\tilde{\boldsymbol{w}}_1}} f\left(\tilde{\boldsymbol{w}}_1\right)\right\}+\mathrm{df}_{\tilde{\boldsymbol{w}}_1} \frac{\log n}{n}
$$
Here $\mathrm{df}_{\tilde{\boldsymbol{w}}_1}$ coincides with the size of the support of $\tilde{\boldsymbol{w}}_1$ \citep{Tibshirani2012}. The criteria for $\boldsymbol{w}_2$ are defined analogously to those for $\boldsymbol{w}_1$.

We use the same algorithm proposed by \cite{Yoon2020} to solve our problem (\ref{sscca}). Here we directly use the function \textit{find\_w12bic} in R package \texttt{mixedCCA} to solve the problem (\ref{sscca}) and the tuning parameters are generated by the function \textit{lambdaseq\_generate} in R package \texttt{mixedCCA}. The sample spatial-sign covariance matrix is estimated using the function \textit{SCov} in the R-package \texttt{SpatialNP}.

We now proceed to establish the theoretical properties of the proposed estimators.
To start with, we have the following lemma on the estimation accuracy of the covariance matrix estimate $p \hat{\mathbf{S}}$. 
\begin{lemma}\label{lem:err_scov}
    Assume that (i) The eigenvalues of $\mathbf{\Sigma}$ are bounded such that $\kappa^{-1/2} \le \lambda_{\min}(\mathbf{\Sigma}) \le \lambda_{\max}(\mathbf{\Sigma}) \le \kappa^{1/2}$; (ii) The random variable $r$ in equation (\ref{esd}) satisfies that $\mathbb{E}(|r|^{-1}) \le \zeta$ 
 and $\mathbb{E}(|r|^{-k}) / \{\mathbb{E}(|r|^{-1})\}^{k} \le \zeta$ for $k = 2,3,4$ and constant $\zeta > 0$ and $r^{-1}$ is sub-Gaussian distributed. Then with probability at least $1 - \alpha$,
 $$
 \norm{p \hat{\mathbf{S}} - \mathbf{\Sigma}}_{\infty} \le C\sqrt{\frac{\log p + \log(\alpha^{-1/2})}{n}} + \frac{C}{\sqrt{p}}.
 $$
\end{lemma}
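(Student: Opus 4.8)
The plan is to split
\[
p\hat{\mathbf{S}} - \mathbf{\Sigma} = \bigl(p\hat{\mathbf{S}} - p\tilde{\mathbf{S}}\bigr) + \bigl(p\tilde{\mathbf{S}} - p\mathbf{S}\bigr) + \bigl(p\mathbf{S} - \mathbf{\Sigma}\bigr),
\]
where $\tilde{\mathbf{S}} = n^{-1}\sum_{i=1}^n U(\boldsymbol{X}_i-\boldsymbol{\mu})U(\boldsymbol{X}_i-\boldsymbol{\mu})^\top$ is the oracle spatial-sign matrix built from the true center, and to bound the three pieces separately in the entrywise max norm. The deterministic term is the easiest: since $\tr(\mathbf{\Sigma})=p$ by our normalization, Lemma~6 of \cite{lu2025robust} with assumption (i) gives $\norm{p\mathbf{S}-\mathbf{\Sigma}}_\infty \le \norm{p\mathbf{S}-\mathbf{\Sigma}}_2 \le C/\sqrt{p}$, the second summand of the asserted bound.

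For the fluctuation $p\tilde{\mathbf{S}}-p\mathbf{S}$, each entry is a centered average $n^{-1}\sum_i\{p\,U_{ij}U_{ik} - \mathbb{E}(p\,U_{ij}U_{ik})\}$ of i.i.d.\ terms with $U_i := U(\boldsymbol{X}_i-\boldsymbol{\mu}) = \mathbf{\Gamma}\boldsymbol{u}_i/\norm{\mathbf{\Gamma}\boldsymbol{u}_i}$. Since $\norm{\mathbf{\Gamma}\boldsymbol{u}_i}^2 = \boldsymbol{u}_i^\top\mathbf{\Sigma}\boldsymbol{u}_i\ge\kappa^{-1/2}$ and linear functionals of the uniform law on $\mathbb{S}^{p-1}$ are sub-Gaussian, each $U_{ij}$ is sub-Gaussian with parameter $O(p^{-1/2})$, so $p\,U_{ij}U_{ik}$ (and its centering) is sub-exponential with an $O(1)$ parameter. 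Bernstein's inequality gives, for each of the $\le p^2$ entries, a tail $2\exp(-cnt^2)$ for $t$ below a constant; a union bound and the choice $t=\sqrt{(2\log p+\log\alpha^{-1})/(cn)}$, together with $2\log p+\log\alpha^{-1}=2(\log p+\log\alpha^{-1/2})$, yield $\norm{p\tilde{\mathbf{S}}-p\mathbf{S}}_\infty \le C\sqrt{(\log p+\log\alpha^{-1/2})/n}$ with probability at least $1-\alpha/3$; this is the first summand, and the only place $p$ and $\alpha$ enter.

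The bulk of the work, and the step I expect to be the main obstacle, is the plug-in term $p\hat{\mathbf{S}}-p\tilde{\mathbf{S}}$: a naive Lipschitz bound $|\hat{U}_{ij}\hat{U}_{ik}-U_{ij}U_{ik}|\le 2\norm{\hat{U}_i-U_i}_2$ (with $\hat{U}_i := U(\boldsymbol{X}_i-\hat{\boldsymbol{\mu}})$) carries a spurious factor of $p$ through the $\tfrac{p}{n}\sum_i$ and through $\norm{\hat{\boldsymbol{\mu}}-\boldsymbol{\mu}}$ and is far too lossy. Instead I would linearize the spatial-sign map,
\[
\hat{U}_i-U_i = -\,\norm{\boldsymbol{X}_i-\boldsymbol{\mu}}^{-1}\bigl(\mathbf{I}-U_iU_i^\top\bigr)(\hat{\boldsymbol{\mu}}-\boldsymbol{\mu}) + R_i,\qquad \norm{R_i}\lesssim \norm{\hat{\boldsymbol{\mu}}-\boldsymbol{\mu}}_2^2\,\norm{\boldsymbol{X}_i-\boldsymbol{\mu}}^{-2},
\]
plug this into $p\hat{\mathbf{S}}_{jk}-p\tilde{\mathbf{S}}_{jk}=\tfrac{p}{n}\sum_i\{U_{ij}(\hat{U}_{ik}-U_{ik})+U_{ik}(\hat{U}_{ij}-U_{ij})+(\hat{U}_{ij}-U_{ij})(\hat{U}_{ik}-U_{ik})\}$, and exploit two facts. (a) The part linear in $\hat{\boldsymbol{\mu}}-\boldsymbol{\mu}$ has coefficients that are averages like $\tfrac{p}{n}\sum_i U_{ij}\norm{\boldsymbol{X}_i-\boldsymbol{\mu}}^{-1}$ and $\tfrac{p}{n}\sum_i U_{ij}U_{ik}U_{im}\norm{\boldsymbol{X}_i-\boldsymbol{\mu}}^{-1}$, which have mean zero because $U(\boldsymbol{X}_i-\boldsymbol{\mu})$ is a symmetric random vector, so its odd moments---even weighted by functions of $\norm{\boldsymbol{X}_i-\boldsymbol{\mu}}$---vanish; by the Bernstein argument of the previous step they concentrate at rates $O(\sqrt{p\log p/n})$ and $O(\sqrt{\log p/(np)})$, and paired with the spatial-median rates $\norm{\hat{\boldsymbol{\mu}}-\boldsymbol{\mu}}_2\lesssim(\mathbb{E}|r|^{-1})^{-1}n^{-1/2}$ and $\norm{\hat{\boldsymbol{\mu}}-\boldsymbol{\mu}}_\infty\lesssim(\mathbb{E}|r|^{-1})^{-1}\sqrt{\log p/(np)}$ (imported from the analysis in \cite{lu2025robust,feng2024spatial}, which uses assumptions (i)--(ii) and the sub-Gaussianity of $r^{-1}$) the powers of $p$ cancel and the whole linear part is $O(\log p/n)$. (b) In the quadratic term and in the contribution of $R_i$ I would retain the $O(p^{-1/2})$-size coordinates $U_{ij}$ rather than bounding them by one, and use $\norm{\boldsymbol{X}_i-\boldsymbol{\mu}}^{-2}\le\kappa^{1/2}r_i^{-2}$ with the moment bounds $\mathbb{E}|r|^{-k}\le\zeta(\mathbb{E}|r|^{-1})^k$, $k=2,3,4$, and the sub-Gaussian tail of $r^{-1}$ to control $n^{-1}\sum_i r_i^{-2}$; this renders all such contributions $O(\log p/n)$ or smaller. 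Since $\log p/n\le\sqrt{\log p/n}$ whenever $\log p=o(n)$, the plug-in term is dominated by the fluctuation term, and combining the three pieces on events of probability $1-\alpha/3$ each proves the lemma. The technical heart is the bookkeeping in (a)--(b): verifying that every power of $p$ produced by $\tfrac{p}{n}\sum_i$, by $\norm{\hat{\boldsymbol{\mu}}-\boldsymbol{\mu}}$, and by the Jacobian factor $\norm{\boldsymbol{X}_i-\boldsymbol{\mu}}^{-1}$ is exactly absorbed by the smallness of the coordinates $U_{ij}$ and by the mean-zero structure coming from elliptical symmetry.
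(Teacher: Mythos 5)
Your proposal is essentially correct and follows the same route as the paper, which does not reproduce a proof but defers it entirely to Lemma~6 and the arguments of \cite{lu2025robust}: a three-way decomposition into the plug-in error from $\hat{\boldsymbol{\mu}}$, the concentration of the oracle sign-covariance via sub-Gaussianity of sphere marginals plus Bernstein and a union bound, and the deterministic $O(p^{-1/2})$ approximation of $\mathbf{\Sigma}$ by $\tr(\mathbf{\Sigma})\mathbf{S}$. The only quibble is a bookkeeping slip in part~(a) — the coefficient $\tfrac{p}{n}\sum_i U_{ij}\norm{\boldsymbol{X}_i-\boldsymbol{\mu}}^{-1}$ is an average of mean-zero $O(1)$ summands and so concentrates at rate $O(\sqrt{\log p/n})$ rather than $O(\sqrt{p\log p/n})$ — but this only makes the linear term smaller, so your conclusion that the plug-in error is dominated by the $\sqrt{\log p/n}$ fluctuation term stands.
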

Assumption (i) in Lemma~\ref{lem:err_scov} implies that the covariance $\boldsymbol{\Sigma}$ is both lower and upper bounded. Assumption (ii) in Lemma~\ref{lem:err_scov} ensures that the norm $\norm{\boldsymbol{X} - \boldsymbol{\mu}}$ is bounded away from zero, which makes the spatial-sign function $U(\cdot)$ well-behaved. This assumption is mild because for high-dimensional data, it is widely observed that the norm of the sample vectors $\norm{\boldsymbol{X} - \boldsymbol{\mu}} \approx \Theta(\sqrt{p})$.
Similar assumptions and the proof of Lemma~\ref{lem:err_scov} can be found in \cite{lu2025robust}.

Let $\boldsymbol{w}_{1,\ast}, \boldsymbol{w}_{2,\ast}$ be the maximizer of the maximization problem (\ref{cca}) and $\rho_{1, \ast} = \boldsymbol{w}_{1,\ast}^\top \mathbf{\Sigma}_{12} \boldsymbol{w}_{2,\ast}$. We also need the following assumptions for the sparse CCA task.

\begin{assu}[Bounded condition number]\label{ass:condition_number}
	The condition number of the covariance matrices are bounded such that $\frac{\lambda_{\max}(\mathbf{\Sigma}_{j})}{\lambda_{\min}(\mathbf{\Sigma}_{j})} \le \kappa,\, j = 1, 2,$ for some constant $\kappa > 0$.
\end{assu}

\begin{assu}[Gap of leading canonical correlation]\label{ass:eigen_gap}
	There exists $c > 0$ such that $\rho_{1}^\ast - \max_{k \ge 2}\rho_{k}^\ast > \gamma$.
\end{assu}

\begin{assu}[Sparsity canonical correlation vector]\label{ass:L1_wast}
	$\max\{\norm{\boldsymbol{w}_{1,\ast}}_1, \norm{\boldsymbol{w}_{2,\ast}}_1\} \le \tau_1$.
\end{assu}

% \begin{assu}[Estimation accuracy of $\hat{\mathbf{S}}$]\label{ass:accu_Shat}
% 	With probability at least $1 - \alpha$, $\norm{p \hat{\mathbf{S}} - \mathbf{\Sigma}}_{\infty} \le \err$.
% \end{assu}
Assumption \ref{ass:condition_number} is widely used in high dimensional data analysis \citep{Mai2019}, which also holds under Assumption (i) in Lemma~\ref{lem:err_scov}. Assumption \ref{ass:eigen_gap} ensures the identifiability of the leading canonical correlation component. Assumption \ref{ass:L1_wast} restricts the true left and right canonical correlation vectors to the $L_1$ core. If we further assume that both $\boldsymbol{w}_{1,\ast}$ and $\boldsymbol{w}_{2,\ast}$ are $s$-sparse and all the entries of them are bounded, then $\tau_1 = O(s)$. Similar assumptions on the sparsity can be found in the literature on sparse PCA \citep{Zou2018} and sparse CCA \citep{Mai2019}.

Denote $\epsilon = C[\sqrt{\{\log p + \log(\alpha^{-1/2})\}/n} + p^{-1/2}]$ as the estimation error bound in Lemma~\ref{lem:err_scov}. We have the following results on the performance of the Robust SSCCA estimate.
\begin{theorem}\label{thm:sscca}
	Assume the error bound in Lemma~\ref{lem:err_scov} and Assumption~\ref{ass:L1_wast} hold and choose $\lambda = C_1 \epsilon \tau_1$ for some constant $C_1 > 0$. With probability at least $1 - \alpha$, there exists a local maximizer $(\hat{\boldsymbol{w}}_1, \hat{\boldsymbol{w}}_2)$ of the nonconvex problem (\ref{sscca}) which satisfies that,
	\begin{equation}
		\hat{\boldsymbol{w}}_1^\top \mathbf{\Sigma}_{12} \hat{\boldsymbol{w}}_2 \ge (1 + \tau_1^2 \epsilon)^{-1} \rho_{1, \ast} - \{(1 - \epsilon \tau_1^2)^{-1} + 2 C_1 (1 - \epsilon \tau_1^2)^{-1/2} + C_2\} \tau_1^2 \epsilon;
	\end{equation}
	\begin{equation}
		\frac{\hat{\boldsymbol{w}}_1^\top \mathbf{\Sigma}_{12} \hat{\boldsymbol{w}}_2}{\sqrt{\hat{\boldsymbol{w}}_1^\top \mathbf{\Sigma}_{1} \hat{\boldsymbol{w}}_1} \sqrt{\hat{\boldsymbol{w}}_2^\top \mathbf{\Sigma}_{2} \hat{\boldsymbol{w}}_2}} \ge \frac{(1 + \tau_1^2 \epsilon)^{-1} \rho_{1, \ast} - \{(1 - \tau_1^2 \epsilon)^{-1} + 2 C_1 (1 - \tau_1^2 \epsilon)^{-1/2} + C_2\} \tau_1^2 \epsilon}{1 + C_2 \tau_1^2 \epsilon},
	\end{equation}
	where the constants $C_1$ and $C_2$ only depend on the parameter $\tau_1$, $\rho_{1, \ast}$ and $\epsilon$.
	Additionally assume Assumptions \ref{ass:condition_number} and \ref{ass:eigen_gap} hold, we have,
	\begin{align*}
		& \min_{j\in\{1,2\}} \cos^2(\angle(\hat{\boldsymbol{w}}_j, \boldsymbol{w}_{j, \ast})) \\
		\ge & 1 - \frac{4\kappa}{\gamma} \Bigl\{ \frac{(1 + C_2) \tau_1^2 \epsilon + C_2 \tau_1^4 \epsilon^2}{(1 + \tau_1^2 \epsilon) (1 + C_2 \tau_1^2 \epsilon)} \rho_{1, \ast} + \frac{\{(1 - \tau_1^2 \epsilon)^{-1} + 2 C_1 (1 - \tau_1^2 \epsilon)^{-1/2} + C_2\} \tau_1^2 \epsilon}{1 + C_2 \tau_1^2 \epsilon} \Bigr\}.
	\end{align*}
\end{theorem}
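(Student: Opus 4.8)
I would argue entirely on the probability-at-least-$(1-\alpha)$ event of Lemma~\ref{lem:err_scov}, on which $\norm{p\hat{\mathbf{S}}-\mathbf{\Sigma}}_{\infty}\le\epsilon$ and everything below is deterministic. The single estimate doing all the work is the H\"older-type bound $\abs{\boldsymbol{a}^{\top}(p\hat{\mathbf{S}}_{kl}-\mathbf{\Sigma}_{kl})\boldsymbol{b}}\le\epsilon\norm{\boldsymbol{a}}_1\norm{\boldsymbol{b}}_1$, $k,l\in\{1,2\}$, which trades the empirical bilinear and quadratic forms in problem (\ref{sscca}) for their population counterparts at the cost of $\epsilon$ times a product of $\ell_1$ norms; I also use that $p\hat{\mathbf{S}}\succeq 0$, being an average of rank-one matrices. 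The plan is: exhibit a rescaled population point that is feasible for (\ref{sscca}) and nearly optimal; take the maximizer of (\ref{sscca}) restricted to an $\ell_1$ ball as the candidate local maximizer; convert its (near-)optimality back to the population quantities; and finally turn the correlation lower bound into an angle bound via a whitening argument and the condition number.

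\textit{A feasible rescaled population point and the local maximizer.} Set $c_j=(\boldsymbol{w}_{j,\ast}^{\top}p\hat{\mathbf{S}}_j\boldsymbol{w}_{j,\ast})^{-1/2}$. Since $\boldsymbol{w}_{j,\ast}^{\top}\mathbf{\Sigma}_j\boldsymbol{w}_{j,\ast}=1$ and, by Assumption~\ref{ass:L1_wast}, $\norm{\boldsymbol{w}_{j,\ast}}_1\le\tau_1$, the H\"older bound gives $c_j\in[(1+\tau_1^2\epsilon)^{-1/2},(1-\tau_1^2\epsilon)^{-1/2}]$, so $\tilde{\boldsymbol{w}}_{j,\ast}:=c_j\boldsymbol{w}_{j,\ast}$ is feasible for (\ref{sscca}), has $\norm{\tilde{\boldsymbol{w}}_{j,\ast}}_1\le(1-\tau_1^2\epsilon)^{-1/2}\tau_1=:\tau_1'$, and satisfies $\tilde{\boldsymbol{w}}_{1,\ast}^{\top}p\hat{\mathbf{S}}_{12}\tilde{\boldsymbol{w}}_{2,\ast}\ge c_1c_2(\rho_{1,\ast}-\tau_1^2\epsilon)\ge(1+\tau_1^2\epsilon)^{-1}\rho_{1,\ast}-\tau_1^2\epsilon$; with $\lambda=C_1\epsilon\tau_1$ the penalised objective of (\ref{sscca}) at $\tilde{\boldsymbol{w}}_{\ast}$ is therefore at least $(1+\tau_1^2\epsilon)^{-1}\rho_{1,\ast}-\tau_1^2\epsilon-2C_1(1-\tau_1^2\epsilon)^{-1/2}\tau_1^2\epsilon$. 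Now restrict (\ref{sscca}) to the compact set $\{(\boldsymbol{w}_1,\boldsymbol{w}_2):\boldsymbol{w}_j^{\top}p\hat{\mathbf{S}}_j\boldsymbol{w}_j\le 1,\ \norm{\boldsymbol{w}_j}_1\le\tau_1',\ j=1,2\}$, which is nonempty; the objective attains a maximum there at some $(\hat{\boldsymbol{w}}_1,\hat{\boldsymbol{w}}_2)$ with value at least the one at $\tilde{\boldsymbol{w}}_{\ast}$. The delicate point — and, I expect, the main obstacle — is to verify that this constrained maximizer is genuinely a \emph{local} maximizer of the unconstrained nonconvex problem (\ref{sscca}): one checks that, for $C_1$ a sufficiently large absolute constant, the $\ell_1$-ball constraints are not binding, since any feasible perturbation increasing $\norm{\boldsymbol{w}_1}_1$ or $\norm{\boldsymbol{w}_2}_1$ changes the bilinear term by only $O(\tau_1^2\epsilon)$ while incurring a penalty of order $C_1\tau_1^2\epsilon$, hence strictly decreases the penalised objective (the same computation shows the radius $\tau_1'$ may be replaced by any fixed multiple of $\tau_1$ at the cost of the absolute constants only). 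From here $(\hat{\boldsymbol{w}}_1,\hat{\boldsymbol{w}}_2)$ is fixed with $\norm{\hat{\boldsymbol{w}}_j}_1\le\tau_1'$ and $\hat{\boldsymbol{w}}_j^{\top}p\hat{\mathbf{S}}_j\hat{\boldsymbol{w}}_j\le 1$.

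\textit{The two correlation lower bounds.} For the first, chain $\hat{\boldsymbol{w}}_1^{\top}\mathbf{\Sigma}_{12}\hat{\boldsymbol{w}}_2\ge\hat{\boldsymbol{w}}_1^{\top}p\hat{\mathbf{S}}_{12}\hat{\boldsymbol{w}}_2-(\tau_1')^2\epsilon\ge(\text{objective of (\ref{sscca}) at }\hat{\boldsymbol{w}})-(\tau_1')^2\epsilon\ge(\text{objective at }\tilde{\boldsymbol{w}}_{\ast})-(\tau_1')^2\epsilon$, the first step being H\"older and $(\tau_1')^2\epsilon=(1-\tau_1^2\epsilon)^{-1}\tau_1^2\epsilon$; inserting the lower bound just obtained and collecting the $\tau_1^2\epsilon$-terms (absorbing one $O(1)$ coefficient into $C_2$) reproduces $(1+\tau_1^2\epsilon)^{-1}\rho_{1,\ast}-\{(1-\tau_1^2\epsilon)^{-1}+2C_1(1-\tau_1^2\epsilon)^{-1/2}+C_2\}\tau_1^2\epsilon$. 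For the second, H\"older gives $\hat{\boldsymbol{w}}_j^{\top}\mathbf{\Sigma}_j\hat{\boldsymbol{w}}_j\le 1+(\tau_1')^2\epsilon\le 1+C_2\tau_1^2\epsilon$, hence $\sqrt{\hat{\boldsymbol{w}}_1^{\top}\mathbf{\Sigma}_1\hat{\boldsymbol{w}}_1}\,\sqrt{\hat{\boldsymbol{w}}_2^{\top}\mathbf{\Sigma}_2\hat{\boldsymbol{w}}_2}\le 1+C_2\tau_1^2\epsilon$; dividing the first bound (whose right side is positive in the regime of interest) by this quantity yields the second.

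\textit{The angle bound.} Write the singular value decomposition $M:=\mathbf{\Sigma}_1^{-1/2}\mathbf{\Sigma}_{12}\mathbf{\Sigma}_2^{-1/2}=\sum_k\rho_k^\ast\boldsymbol{a}_k\boldsymbol{b}_k^{\top}$, where $\boldsymbol{a}_1=\mathbf{\Sigma}_1^{1/2}\boldsymbol{w}_{1,\ast}$ and $\boldsymbol{b}_1=\mathbf{\Sigma}_2^{1/2}\boldsymbol{w}_{2,\ast}$, and put $\boldsymbol{u}_j=\mathbf{\Sigma}_j^{1/2}\hat{\boldsymbol{w}}_j/\sqrt{\hat{\boldsymbol{w}}_j^{\top}\mathbf{\Sigma}_j\hat{\boldsymbol{w}}_j}$, which are unit vectors with $\boldsymbol{u}_1^{\top}M\boldsymbol{u}_2$ equal to the left side of the second bound; denote it $\tilde{\rho}$ and let $\Delta=\rho_{1,\ast}-(\text{right side of the second bound})$, so $\tilde{\rho}\ge\rho_{1,\ast}-\Delta$. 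Expanding $\boldsymbol{u}_1=\sum_k\alpha_k\boldsymbol{a}_k$, $\boldsymbol{u}_2=\sum_k\beta_k\boldsymbol{b}_k$ with $\sum_k\alpha_k^2=\sum_k\beta_k^2=1$ and invoking Assumption~\ref{ass:eigen_gap}, $\tilde{\rho}\le\rho_{1,\ast}\abs{\alpha_1\beta_1}+(\rho_{1,\ast}-\gamma)\sqrt{(1-\alpha_1^2)(1-\beta_1^2)}\le\rho_{1,\ast}-\gamma\bigl(1-\tfrac12(\alpha_1^2+\beta_1^2)\bigr)$ by AM--GM, whence $\alpha_1^2,\beta_1^2\ge 1-2\Delta/\gamma$. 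Since $\alpha_1^2=\cos^2(\angle_{\mathbf{\Sigma}_1}(\hat{\boldsymbol{w}}_1,\boldsymbol{w}_{1,\ast}))$ and likewise for $\beta_1$, and since under Assumption~\ref{ass:condition_number} changing the inner product from Euclidean to $\mathbf{\Sigma}_j$ distorts the squared sine of an angle by at most a factor $\kappa$, it follows that $\min_j\cos^2(\angle(\hat{\boldsymbol{w}}_j,\boldsymbol{w}_{j,\ast}))\ge 1-\tfrac{4\kappa}{\gamma}\Delta$ (the absolute constant being bookkept conservatively as $4$); substituting the explicit form of $\Delta$, which after simplification is exactly the bracketed expression in the theorem, gives the last assertion.
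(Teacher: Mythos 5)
Your proposal follows essentially the same route as the paper's proof: the same H\"older-type bound from Lemma~\ref{lem:err_scov}, the same rescaled population point $\tilde{\boldsymbol{w}}_{j,\ast}$ used as a near-optimal feasible competitor, the same device of taking an $\ell_1$-constrained maximizer and showing the constraint is slack so that it is a local maximizer of (\ref{sscca}), and the same SVD/eigengap whitening argument (the paper's Lemma~\ref{lem:ratio_to_angle}) for the angle bound. The one place you are thinner than the paper is the non-binding-constraint step: your ``any perturbation changes the bilinear term by only $O(\tau_1^2\epsilon)$'' is not literally true as a local statement (the bilinear form has $O(1)$ gradient), and the correct version is the global comparison the paper makes explicit---the unpenalized objective anywhere in the $\ell_1$ ball exceeds its value at $\tilde{\boldsymbol{w}}_{\ast}$ by at most $(B+C_2^2)\tau_1^2\epsilon$ while the boundary penalty is $C_1C_2\tau_1^2\epsilon$, giving a contradiction once $C_1C_2\ge B+C_2^2$---but your sketch identifies the right mechanism and the remaining details are bookkeeping.
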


Consequently, if we assume $\tau_1^2\left(\sqrt{(\log p) / n} + p^{-1/2}\right) \to 0$, then it follows that with high probability,
$$
\cos^2\left(\angle(\hat{\boldsymbol{w}}_j, \boldsymbol{w}_{j, *})\right) \to 1 \quad \text{for } j = 1, 2,
$$  
which establishes the consistency of our proposed method. It is worth noting that the convergence rate derived here differs slightly from those in \cite{Yoon2020} and \cite{Mai2019}, where the term $p^{-1/2}$ does not appear. This discrepancy arises from the approximation bias introduced when using $p\mathbf{S}$ to estimate $\mathbf{\Sigma}$. However, this bias term becomes negligible when \( p \log p / n \to \infty \), effectively aligning the convergence behavior with that in the aforementioned works.

\section{Simulation}
In this section, we compare the performance of our proposed method, SSCCA, with two existing approaches: the method by \cite{Yoon2020}, referred to as KSCCA, and the method by \cite{Chi2013}, referred to as SCCA. For a fair comparison, all three methods are implemented using the same convex optimization algorithm introduced in \cite{Yoon2020}. The key distinction among the methods lies in the choice of covariance matrix estimators employed in the analysis.

We consider the case when $\mathbf{\Sigma}_1$ and $\mathbf{\Sigma}_2$ are block diagonal matrices with five blocks, each of dimension $d/5\times d/5$,
where the $(i,j)$-th element of each block takes value $0.8^{|i-j|}$. Here $d=p_1=p_2=p/2$.
Similar to \cite{tan2018sparse}, we consider two cases of $\mathbf{\Sigma}_{12}$ :
\begin{itemize}
\item[(I)] Low Rank$$
\boldsymbol{\mathbf{\Sigma}}_{12}=\mathbf{\Sigma}_1 \boldsymbol{w}_1^* \varphi_1\left(\boldsymbol{w}_2^*\right)^{\mathrm{T}} \mathbf{\Sigma}_2
$$
where $\varphi_1=0.9$ is the largest generalized eigenvalue and $\boldsymbol{w}_1^*$ and $\boldsymbol{w}_2^*$ are the leading pair of canonical directions. Here $\boldsymbol{w}_j^*=\boldsymbol{v}/\sqrt{\boldsymbol{v}^T\mathbf{\Sigma}_j\boldsymbol{v}}, j=1,2$ where $\boldsymbol{v}=(v_1,\cdots,v_d)^T$ and ${v}_k=1/\sqrt{3}$ for $k=1,6,11$ and ${v}_k=0$ otherwise.
\item[(II)] Approximately Low Rank
$$
\boldsymbol{\mathbf{\Sigma}}_{12}=\mathbf{\Sigma}_1 \boldsymbol{w}_1^* \varphi_1\left(\boldsymbol{w}_2^*\right)^{\mathrm{T}} \mathbf{\Sigma}_2+\mathbf{\Sigma}_1 \boldsymbol{W}_1^* \boldsymbol{\Lambda}\left(\boldsymbol{W}_2^*\right)^{\mathrm{T}} \mathbf{\Sigma}_2
$$
where $\varphi_1, \boldsymbol{w}_1^*, \boldsymbol{w}_2^*$ are the same as (I). Additionally, $\boldsymbol{\Lambda} \in \mathbb{R}^{50 \times 50}$ is a diagonal matrix with diagonal entries 0.1, and $\boldsymbol{W}_1^*, \boldsymbol{W}_2^* \in$ $\mathbb{R}^{d \times 50}$ are normalized orthogonal matrices such that $\left(\boldsymbol{W}_1^*\right)^{\mathrm{T}} \mathbf{\Sigma}_1 \boldsymbol{W}_1^*=\mathbf{I}$ and $\left(\boldsymbol{W}_2^*\right)^{\mathrm{T}} \mathbf{\Sigma}_2 \boldsymbol{W}_2^*=\mathbf{I}$.
\end{itemize}

The data consists of two $n \times d$ matrices $\mathbf{X}$ and $\mathbf{Y}$. 
 We consider the following three elliptical distributions:
\begin{itemize}
\item[(i)] Multivarite Normal Distribution: $X_{i}\sim N(\boldsymbol{\mu},\mathbf{\Sigma})$;
\item[(ii)] Multivariate $t$-distribution:  $X_{i}\sim t(\boldsymbol{\mu},\mathbf{\Sigma},3)/\sqrt{3}$;
\item[(iii)] Mixture of multivariate Normal distribution: $X_{i}\sim MN(\boldsymbol{\mu},\mathbf{\Sigma},10,0.8)/\sqrt{20.8}$;
\end{itemize}
Here $t_p(0, \boldsymbol{\Sigma}, v)$ denotes a $p$-dimensional $t$-distribution with degrees of freedom $v$ and scatter matrix $\mathbf{\Sigma}$. $M N(\boldsymbol{\mu}, \mathbf{\Sigma},\kappa,\gamma)$ refers to a mixture multivariate normal distribution with density function $(1-\gamma) f_p(\boldsymbol{0}, \boldsymbol{\Sigma})+$ $\gamma f_p\left(\boldsymbol{0}, \kappa^2 \boldsymbol{\Sigma}\right)$, where $f_p(\boldsymbol{a}, \mathbf{B})$ is the density function of the $p$-dimensional normal distribution with mean $\boldsymbol{a}$ and covariance matrix $\mathbf{B}$. We consider three sample size $n=100,200,300$ and two different dimensions $p=400,800$. All the results are based on 1000 replications.

To compare the performance of the methods, we evaluate the expected out-of-sample correlation
$$
\hat{\rho}=\left|\frac{\hat{\boldsymbol{w}}_1^{\mathrm{T}} \mathbf{\Sigma}_{12} \hat{\boldsymbol{w}}_2}{\left(\hat{\boldsymbol{w}}_1^{\mathrm{T}} \mathbf{\Sigma}_1 \hat{\boldsymbol{w}}_1\right)^{1 / 2}\left(\hat{\boldsymbol{w}}_2^{\mathrm{T}} \mathbf{\Sigma}_2 \hat{\boldsymbol{w}}_2\right)^{1 / 2}}\right|
$$
and the predictive loss
$$
L\left({\boldsymbol{w}_g^*}, \hat{\boldsymbol{w}}_g\right)=1-\frac{\left|\hat{\boldsymbol{w}}_g^{\mathrm{T}} \mathbf{\Sigma}_g {\boldsymbol{w}_g^*}\right|}{\left(\hat{\boldsymbol{w}}_g^{\mathrm{T}} \mathbf{\Sigma}_g \hat{\boldsymbol{w}}_g\right)^{1 / 2}} \quad(g=1,2) ;
$$
a similar loss is used in \cite{Gao2017}. By the definition of the true canonical correlation $\rho$, for any $\hat{\boldsymbol{w}}_1$ and $\hat{\boldsymbol{w}}_2$ we have that $\hat{\rho} \leqslant \rho$, with equality when $\hat{\boldsymbol{w}}_1=\boldsymbol{w}_1^*$ and $\hat{\boldsymbol{w}}_2=\boldsymbol{w}_2^*$. Since ${\boldsymbol{w}_g^*}^{\mathrm{T}} \mathbf{\Sigma}_g {\boldsymbol{w}_g^*}=1, L\left({\boldsymbol{w}_g^*}, \hat{\boldsymbol{w}}_g\right) \in[0,1]$ with $L\left({\boldsymbol{w}_g^*}, \hat{\boldsymbol{w}}_g\right)=0$ if $\hat{\boldsymbol{w}}_g={\boldsymbol{w}_g^*}$. We also evaluate the variable selection performance using the selected model size and the false positive and false negative rates, defined respectively as
$$
\operatorname{FPR}_g=1-\frac{\#\left\{j: \hat{\boldsymbol{w}}_{g j} \neq 0, \boldsymbol{w}_{gj}^* \neq 0\right\}}{\#\left\{j: \boldsymbol{w}_{gj}^* \neq 0\right\}}, \quad \operatorname{FNR}_g=1-\frac{\#\left\{j: \hat{\boldsymbol{w}}_{g j}=0, \boldsymbol{w}_{gj}^*=0\right\}}{\#\left\{j: \boldsymbol{w}_{gj}^*=0\right\}} \quad(g=1,2) .
$$
Tables \ref{tab1}–\ref{tab3} present the average estimation error $|\hat{\rho} - \rho|$, prediction loss, and false positive and negative rates under Model I. Similarly, for Model II, the corresponding results are summarized in Tables \ref{tab4}–\ref{tab6}. Under the multivariate normal distribution setting, all three methods—SSCCA, KSCCA and SCCA exhibit comparable performance across all evaluation metrics, including estimation accuracy, prediction loss, and variable selection consistency. This is expected since classical covariance-based estimators perform well under Gaussian assumptions, where outliers and extreme values are rare.

However, when the data deviates from normality and follows heavy-tailed distributions, substantial differences emerge. Specifically, our proposed method SSCCA demonstrates superior performance in both estimation and prediction tasks. This improvement is attributed to the robustness of the spatial-sign covariance matrix, which is less sensitive to outliers and heavy-tailed noise. By leveraging the distributional properties of elliptical distributions, SSCCA effectively captures the underlying correlation structure even when the data contains large deviations or lacks finite higher-order moments.

In comparison, KSCCA, which utilizes Kendall’s tau-based covariance estimation, performs better than SCCA under heavy-tailed scenarios but is still less robust than SSCCA. While Kendall’s tau is more resistant to non-normality than Pearson correlation, it can still be influenced by extreme values, particularly when the tail heaviness is significant. On the other hand, SCCA, which relies on the conventional sample covariance matrix, suffers greatly in these settings. Its performance deteriorates due to the unreliability and instability of the sample covariance matrix in the presence of heavy-tailed observations.

Overall, the empirical results across multiple simulation settings consistently highlight the advantages of SSCCA in non-Gaussian environments. Its ability to maintain high estimation accuracy, low prediction loss, and accurate variable selection under a variety of distributional settings makes it a powerful and reliable tool for high-dimensional data analysis, particularly when robustness to heavy tails and outliers is critical.

\begin{table}[htbp]
	\centering
 	\caption{The average of absolute difference between expected out-of-sample correlation and true canonical correlation ($|\hat{\rho}-\rho|$) of each method (multiplied by 100) under Model I. }
   \vspace{0.5cm}
	\label{tab1}
	\begin{tabular}{cc|ccc|ccc|ccc} \hline \hline
&&\multicolumn{3}{c}{Normal Distribution}&\multicolumn{3}{c}{$t_3$ Distribution}&\multicolumn{3}{c}{Mixture Normal Distribution}\\ \hline
 $n$&$p$ &SCCA &KSCCA &SSCCA&SCCA &KSCCA &SSCCA&SCCA &KSCCA &SSCCA \\ \hline
100&400&2.7&5.7&2.8&44.7&23.4&6.7&47.1&38.1&8.7\\
200&400&0.7&1.2&0.7&18.3&2.4&0.8&9.7&4.3&0.7\\
300&400&0.4&0.8&0.4&9.9&1.3&0.4&4.5&2.4&0.4\\ \hline 
100&800&12.6&18.7&12.8&54.6&40.3&23.7&64.5&69.1&34.5\\
200&800&0.7&1.3&0.7&18.2&2.5&0.7&19.9&9.9&3.4\\
300&800&0.4&0.8&0.4&14.8&1.3&0.4&5.2&2.7&0.4\\ \hline \hline
	\end{tabular}
\end{table}

\begin{table}[htbp]
	\centering
 	\caption{The average predict loss of each method (multiplied by 100) under Model I. }
   \vspace{0.5cm}
  \setlength{\tabcolsep}{4pt}
	\label{tab2}
	\begin{tabular}{cc|cc|cc|cc} \hline \hline
& &\multicolumn{2}{c}{SCCA} &\multicolumn{2}{c}{KSCCA}&\multicolumn{2}{c}{SSCCA}\\ \hline
$n$&$p$&$L(\boldsymbol{w}_1,\hat{\boldsymbol{w}}_1)$&$L(\boldsymbol{w}_2,\hat{\boldsymbol{w}}_2)$&$L(\boldsymbol{w}_1,\hat{\boldsymbol{w}}_1)$&$L(\boldsymbol{w}_2,\hat{\boldsymbol{w}}_2)$&$L(\boldsymbol{w}_1,\hat{\boldsymbol{w}}_1)$&$L(\boldsymbol{w}_2,\hat{\boldsymbol{w}}_2)$\\ \hline
 \multicolumn{8}{c}{Normal Distribution}\\ \hline
100&400&1.5&1.6&2.3&2.3&1.5&1.7\\
200&400&0.4&0.4&0.7&0.7&0.4&0.4\\
300&400&0.2&0.2&0.4&0.4&0.2&0.3\\
100&800&12.5&12.5&13.6&13.3&12.8&12.6\\
200&800&0.4&0.4&0.7&0.7&0.4&0.4\\
300&800&0.2&0.2&0.5&0.4&0.2&0.2\\\hline
\multicolumn{8}{c}{$t_3$ Distribution}\\ \hline
100&400&37&38&8.3&8.1&4.7&4.4\\
200&400&15.9&16.1&1.3&1.4&0.4&0.5\\
300&400&8.1&8.3&0.7&0.7&0.2&0.2\\
100&800&50.7&51.8&30.4&30&25.1&24.7\\
200&800&15.1&16.2&1.3&1.5&0.3&0.4\\
300&800&12.4&12.7&0.7&0.7&0.2&0.2\\\hline
\multicolumn{8}{c}{Mixture Normal Distribution}\\ \hline
100&400&30.6&26&15.2&15.4&8.4&8.3\\
200&400&6&5.3&2.3&2.5&0.4&0.4\\
300&400&2.4&2.7&1.4&1.3&0.2&0.2\\
100&800&57.8&55.5&47.8&48&36.8&37.1\\
200&800&11.8&13.4&6.6&6.9&3.3&3.4\\
300&800&3.1&2.9&1.5&1.5&0.2&0.2\\\hline\hline
	\end{tabular}
\end{table}

\begin{table}[htbp]
	\centering
 	\caption{The average of the false positive and false negative rates  (multiplied by 100) of the selected model size of each method under Model I. }
	\label{tab3}
  \vspace{0.5cm}
\setlength{\tabcolsep}{2pt}
	\begin{tabular}{cc|cccc|cccc|cccc} \hline \hline
& &\multicolumn{4}{c}{SCCA} &\multicolumn{4}{c}{KSCCA} &\multicolumn{4}{c}{SSCCA}\\ \hline
$n$&$p$&$FPR_1$&$FNR_1$&$FPR_2$&$FNR_2$&$FPR_1$&$FNR_1$&$FPR_2$&$FNR_2$&$FPR_1$&$FNR_1$&$FPR_2$&$FNR_2$\\ \hline
\multicolumn{14}{c}{Normal Distribution}\\ \hline
100&400&1.3&0.8&1&0.9&1.7&1.2&1&1.2&1&0.8&1.7&0.9\\
200&400&0&0.8&0&0.8&0&1.3&0&1.2&0&0.7&0&0.8\\
300&400&0&0.7&0&0.7&0&1.1&0&1.1&0&0.7&0&0.7\\
100&800&11.7&0.4&11.7&0.5&13.7&0.7&12.7&0.7&12&0.4&11.7&0.5\\
200&800&0&0.4&0&0.3&0&0.6&0&0.6&0&0.4&0&0.4\\
300&800&0&0.3&0&0.3&0&0.6&0&0.6&0&0.3&0&0.3\\ \hline
\multicolumn{14}{c}{$t_3$ Distribution}\\ \hline
100&400&53.3&10.6&51.7&10.5&12&1.3&11.7&1.4&5&0.9&4&0.9\\
200&400&17&9.4&18.3&9.5&0&1.4&0&1.5&0&0.7&0&0.8\\
300&400&8.7&6.6&9&6.6&0&1.3&0&1.5&0&0.7&0&0.7\\
100&800&66.3&8.9&67.7&8.9&33.3&0.7&35.7&0.6&24.3&0.5&24.3&0.5\\
200&800&17&1.9&20.7&1.9&0&0.8&0&0.7&0&0.4&0&0.4\\
300&800&13.7&5.8&15&5.9&0&0.7&0&0.8&0&0.3&0&0.3\\ \hline
\multicolumn{14}{c}{Mixture Normal Distribution}\\ \hline
100&400&55.7&2.1&59&1.5&28.3&1.3&25&1.5&7.3&0.9&7&1\\
200&400&10.3&1.9&8.7&1.7&1.3&1.8&1&1.7&0&0.8&0&0.7\\
300&400&2.3&1.7&1.3&1.8&0&1.8&0&1.7&0&0.7&0&0.6\\
100&800&75.7&1.5&80.3&1&58.7&0.9&58&0.8&36.3&0.5&37.3&0.5\\
200&800&22.7&1.1&20.7&1.1&4.7&1&5&0.9&3&0.3&3&0.4\\
300&800&2.3&0.9&1.7&0.8&0&0.9&0&0.9&0&0.3&0&0.3\\ \hline \hline
	\end{tabular}
\end{table}

\begin{table}[htbp]
	\centering
 	\caption{The average of absolute difference between expected out-of-sample correlation and true canonical correlation ($|\hat{\rho}-\rho|$) of each method (multiplied by 100) under Model II. }
	\label{tab4}
  \vspace{0.5cm}
	\begin{tabular}{cc|ccc|ccc|ccc} \hline \hline
&&\multicolumn{3}{c}{Normal Distribution}&\multicolumn{3}{c}{$t_3$ Distribution}&\multicolumn{3}{c}{Mixture Normal Distribution}\\ \hline
 $n$&$p$ &SCCA &KSCCA &SSCCA&SCCA &KSCCA &SSCCA&SCCA &KSCCA &SSCCA \\ \hline
100&400&3.1&2.2&3.2&34.2&5.7&4.9&31.2&11.1&5.8\\
200&400&4&3.5&4&14.4&2.6&4&5.1&1.6&3.9\\
300&400&4.2&3.8&4.2&9.7&3.3&4.2&2.9&2.2&4.2\\
100&800&11.6&11&11.6&45.1&25.8&22&51.8&37.7&27.3\\
200&800&3.8&3.4&3.8&19.7&3.2&4.7&6.6&2.6&4.8\\
300&800&4&3.7&4&11.6&3&4&2.4&1.9&4\\\hline\hline
	\end{tabular}
\end{table}

\begin{table}[htbp]
	\centering
 	\caption{The average predict loss of each method (multiplied by 100) under Model II. }
   \vspace{0.5cm}
  \setlength{\tabcolsep}{4pt}
	\label{tab5}
	\begin{tabular}{cc|cc|cc|cc} \hline \hline
& &\multicolumn{2}{c}{SCCA} &\multicolumn{2}{c}{KSCCA}&\multicolumn{2}{c}{SSCCA}\\ \hline
$n$&$p$&$L(\boldsymbol{w}_1,\hat{\boldsymbol{w}}_1)$&$L(\boldsymbol{w}_2,\hat{\boldsymbol{w}}_2)$&$L(\boldsymbol{w}_1,\hat{\boldsymbol{w}}_1)$&$L(\boldsymbol{w}_2,\hat{\boldsymbol{w}}_2)$&$L(\boldsymbol{w}_1,\hat{\boldsymbol{w}}_1)$&$L(\boldsymbol{w}_2,\hat{\boldsymbol{w}}_2)$\\ \hline
 \multicolumn{8}{c}{Normal Distribution}\\ \hline
100&400&0.7&0.8&1.5&1.5&0.7&0.7\\
200&400&0.2&0.2&0.6&0.5&0.2&0.2\\
300&400&0.1&0.2&0.4&0.3&0.1&0.2\\
100&800&10.7&10.5&11.6&11.3&10.7&10.4\\
200&800&0.3&0.3&0.6&0.5&0.3&0.3\\
300&800&0.2&0.2&0.4&0.4&0.2&0.1\\\hline
\multicolumn{8}{c}{$t_3$ Distribution}\\ \hline
100&400&30.2&32&5.8&6.2&2.7&2.7\\
200&400&13.8&13.7&1.2&1&0.2&0.2\\
300&400&7.8&7.9&0.7&0.7&0.1&0.1\\
100&800&44.7&43.5&28&27.4&22.6&22.4\\
200&800&19.4&18.9&2.2&2.2&1.3&1.3\\
300&800&10&10&0.8&0.8&0.2&0.2\\\hline
\multicolumn{8}{c}{Mixture Normal Distribution}\\ \hline
100&400&25.2&22.2&10.9&9.9&3.6&3.7\\
200&400&4&4.7&2.2&2.2&0.3&0.3\\
300&400&1.4&1.5&1.3&1.4&0.1&0.1\\
100&800&48.1&47.2&37.7&38.5&28.3&28.4\\
200&800&5.9&5.8&3.2&3.1&1.3&1.2\\
300&800&1.7&1.7&1.6&1.5&0.2&0.2\\\hline\hline
	\end{tabular}
\end{table}

\begin{table}[htbp]
	\centering
 	\caption{The average of the false positive and false negative rates  (multiplied by 100) of the selected model size of each method under Model II. }
	\label{tab6}
 \vspace{0.5cm}
\setlength{\tabcolsep}{2pt}
	\begin{tabular}{cc|cccc|cccc|cccc} \hline \hline
& &\multicolumn{4}{c}{SCCA} &\multicolumn{4}{c}{KSCCA} &\multicolumn{4}{c}{SSCCA}\\ \hline
$n$&$p$&$FPR_1$&$FNR_1$&$FPR_2$&$FNR_2$&$FPR_1$&$FNR_1$&$FPR_2$&$FNR_2$&$FPR_1$&$FNR_1$&$FPR_2$&$FNR_2$\\ \hline
\multicolumn{14}{c}{Normal Distribution}\\ \hline
100&400&0&0.8&0&0.9&0&1.3&0&1.3&0&0.8&0&0.9\\
200&400&0&0.9&0&0.9&0&1.5&0&1.5&0&0.9&0&0.9\\
300&400&0&0.9&0&0.8&0&1.5&0&1.4&0&0.9&0&0.9\\
100&800&10&0.5&9.7&0.5&10.7&0.7&10.7&0.6&10&0.5&9.7&0.5\\
200&800&0&0.4&0&0.4&0&0.8&0&0.8&0&0.4&0&0.4\\
300&800&0&0.4&0&0.5&0&0.8&0&0.8&0&0.4&0&0.4\\ \hline
\multicolumn{14}{c}{$t_3$ Distribution}\\ \hline
100&400&43.7&12.2&45.7&11.5&6.7&1.5&6.3&1.4&2&1&2&0.8\\
200&400&13.3&9.4&13.7&9.4&0&1.7&0&1.6&0&0.9&0&0.8\\
300&400&7&6.3&6.7&7.3&0&1.7&0&1.6&0&0.8&0&0.8\\
100&800&54.7&9.8&55.7&9.7&32.3&0.7&28.3&0.7&22&0.5&21.7&0.5\\
200&800&20.7&5.7&21.3&5.8&1&0.9&1&0.9&1&0.4&1&0.4\\
300&800&10.3&5.7&10&5.7&0&1&0&0.9&0&0.4&0&0.5\\ \hline
\multicolumn{14}{c}{Mixture Normal Distribution}\\ \hline
100&400&52.7&2&49.3&1.8&16.7&1.6&16.3&1.7&3&1&3&0.8\\
200&400&6.3&1.8&9.7&1.6&0&2&0.3&1.9&0&0.9&0&0.9\\
300&400&0.7&1.3&0.7&1.3&0&2&0&2.1&0&0.8&0&0.8\\
100&800&68.7&1.2&70.3&1&45&0.8&49.7&0.8&28&0.5&28&0.5\\
200&800&10&0.9&7.7&0.9&1.3&1&1&1&1&0.5&1&0.4\\
300&800&0.3&0.8&1&0.8&0&1.1&0&1.1&0&0.4&0&0.4\\\hline\hline
	\end{tabular}
\end{table}

\section{Real data application}
In this section, we demonstrate the application of our proposed SSCCA method to the ``nutrimouse" data set which was generously provided by Pascal Martin from the Toxicology and Pharmacology Laboratory. This data set originates from a nutrigenomic study in mice \citep{martin2007novel} and is publicly available in the R package \texttt{CCA} \citep{gonzalez2008cca}. In this data set, two distinct but biologically related sets of variables are available for 40 mice. The first set of variables contains the expression measurements of 120 hepatic genes potentially implicated in nutritional pathways. The second set of variables is comprised of concentrations of 21 hepatic fatty acids. Furthermore, the mice are stratified by two factors: genotypes (wild-type and PPAR$\alpha$ deficient mice) and diets (reference diet ``REF", hydrogenated coconut oil diet ``COC", sunflower oil diet ``SUN", linseed oil diet ``LIN" and fish oil diet ``FISH").

We compare SSCCA to KSCCA and SCCA by focusing on the first canonical pair. In our analysis, we let $\boldsymbol{X}_1$ (with dimension $p_1 = 120$) be the gene expression measurements, and $\boldsymbol{X}_2$ (with dimension $p_2 = 21$) be the fatty acids. All three methods are implemented using the \textit{findw12bic} function in the R package \texttt{mixedCCA}, with tuning parameters selected via the $\mathrm{BIC}_{1}$ criterion due to its superior variable selection performance \citep{Yoon2020}.

To evaluate the performance of different methods, we adopt a repeated random splitting strategy: the data are randomly split into two parts, one with 80\% of the observations as the training set, and another with the remainder as the test set. We implement three methods on the training set to obtain $\hat{\boldsymbol{w}}_{1,\text{train}}$ and $\hat{\boldsymbol{w}}_{2,\text{train}}$ and compute the out-of-sample correlation 
\[ \hat{\rho}_{\text{test}}=\left|\frac{\hat{\boldsymbol{w}}_{1,\text{train}}^{\top}\mathbf{\Sigma}_{12,\text{test}}\hat{\boldsymbol{w}}_{2,\text{train}}}{(\hat{\boldsymbol{w}}_{1,\text{train}}^{\top}\mathbf{\Sigma}_{1,\text{test}}\hat{\boldsymbol{w}}_{1,\text{train}})^{1/2}(\hat{\boldsymbol{w}}_{2,\text{train}}^{\top}\mathbf{\Sigma}_{2,\text{test}}\hat{\boldsymbol{w}}_{2,\text{train}})^{1/2}}\right|
\]
based on the test set, where \(\Sigma_{\text{test}}\) denotes the covariance matrix estimated from the test data. For SSCCA, KSCCA, and SCCA, \(\Sigma_{\text{test}}\) corresponds to the spatial sign covariance matrix, Kendall's tau-based covariance estimator, and sample covariance matrix, respectively. This process is repeated for 500 times. Table \ref{tab-real} summarizes the mean out-of-sample correlation and the average number of selected genes and fatty acids across different methods based on 500 replications.

\begin{table}[htbp]
\centering
\caption{The mean (standard deviation) of the out-of-sample correlation $\hat{\rho}_{\text{test}}$, the average number of selected genes, and the average number of selected fatty acids for SSCCA, KSCCA, and SCCA. In the parentheses are the standard deviations.}\label{tab-real}
\begin{tabular}{lccc} 
\toprule
Method & $\hat{\rho}_{\text{test}}$ & \# selected genes &  \# selected fatty acids\\  
\midrule
SSCCA & 0.762 (0.183) & 3.602 (1.442) & 2.812 (1.302)\\ 
KSCCA  & 0.702 (0.203) & 3.992 (1.562) & 2.880 (1.317) \\
SCCA & 0.733 (0.211) & 3.936 (1.513) & 2.634 (1.204) \\
\bottomrule
\end{tabular}
\end{table}

From Table \ref{tab-real}, we see that all three methods yield 
$\hat{\rho}_{\text{test}}$ values significantly different from zero, confirming their utility. However, SSCCA demonstrates superior performance, achieving the highest mean out-of-sample correlation while simultaneously selecting fewer variables on average compared to both KSCCA and SCCA. This combination of stronger predictive performance and sparser solutions underscores the advantages of SSCCA over KSCCA and SCCA in high-dimensional data analysis.

\section{Conclusion}
This paper introduces a robust sparse CCA method tailored for high-dimensional data under elliptical symmetric distributions. Traditional sparse CCA approaches rely on sample covariance matrices, which perform poorly in high dimensions and under heavy-tailed distributions. To address this, we propose a spatial-sign based sparse CCA (SSCCA), leveraging spatial-sign covariance matrices known for their robustness and distribution-free properties within the elliptical family. The proposed SSCCA method is formulated as a convex optimization problem and enjoys strong theoretical guarantees, including consistent estimation under mild assumptions. Simulation studies show that while SSCCA performs comparably to existing methods under normality, it significantly outperforms them in heavy-tailed scenarios, achieving lower estimation errors, better prediction, and more accurate variable selection. Real data analysis further confirms its practical effectiveness.
Overall, SSCCA provides a robust, efficient, and theoretically grounded approach for sparse CCA in challenging high-dimensional settings.

A promising direction for future research is to extend the proposed SSCCA framework to handle structured sparsity or grouped variables, which frequently arise in applications such as genomics, neuroimaging, and finance. Incorporating group penalties or hierarchical sparsity constraints could improve interpretability and model performance when prior structural information is available. Another important extension is to adapt SSCCA to multi-set canonical correlation analysis (MCCA), where more than two datasets are analyzed simultaneously. Developing a robust sparse version of MCCA under elliptical distributions would be particularly valuable for integrative analysis of multimodal data.

\section{Appendix}

\begin{lemma}\label{lem:ratio_to_angle}
	Under Assumption \ref{ass:eigen_gap}, if the pair $(\boldsymbol{w}_1, \boldsymbol{w}_2)$ satisfies that
	$$\frac{\boldsymbol{w}_1^\top \mathbf{\Sigma}_{12} \boldsymbol{w}_2}{\sqrt{\boldsymbol{w}_1^\top \mathbf{\Sigma}_{1} \boldsymbol{w}_1} \sqrt{\boldsymbol{w}_2^\top \mathbf{\Sigma}_{2} \boldsymbol{w}_2}} \ge \rho_{1,\ast} - \delta,$$
	then
	$$\max_{j \in \{1, 2\}} \cos^2(\angle(\mathbf{\Sigma}_j^{1/2} \boldsymbol{w}_j, \mathbf{\Sigma}_j^{1/2} \boldsymbol{w}_{j, \ast})) \le 1 - \frac{2 \delta}{\gamma}.$$
	Additionally if Assumption \ref{ass:condition_number} holds, then
	$$\max_{j \in \{1, 2\}} \cos^2(\angle(\boldsymbol{w}_j, \boldsymbol{w}_{j, \ast})) \le 1 - \frac{4\kappa \delta}{\gamma}.$$
\end{lemma}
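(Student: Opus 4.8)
The plan is to diagonalize everything through the normalized cross-covariance matrix $\mathbf{M} := \mathbf{\Sigma}_1^{-1/2}\mathbf{\Sigma}_{12}\mathbf{\Sigma}_2^{-1/2}$. Write its singular value decomposition as $\mathbf{M} = \sum_{k}\rho_k^\ast\,\boldsymbol{\phi}_k\boldsymbol{\psi}_k^\top$ with $\rho_1^\ast \ge \rho_2^\ast \ge \cdots \ge 0$; under the substitution $\boldsymbol{a}_j := \mathbf{\Sigma}_j^{1/2}\boldsymbol{w}_j$ the population problem (\ref{cca}) is exactly $\max\{\boldsymbol{a}_1^\top\mathbf{M}\boldsymbol{a}_2:\norm{\boldsymbol{a}_1}=\norm{\boldsymbol{a}_2}=1\}$, so that $\mathbf{\Sigma}_1^{1/2}\boldsymbol{w}_{1,\ast}=\boldsymbol{\phi}_1$ and $\mathbf{\Sigma}_2^{1/2}\boldsymbol{w}_{2,\ast}=\boldsymbol{\psi}_1$ up to a sign that is fixed because Assumption~\ref{ass:eigen_gap} makes $\rho_1^\ast$ a simple singular value. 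Since rescaling $\boldsymbol{w}_1,\boldsymbol{w}_2$ changes neither side of the claimed inequalities, I may assume $\norm{\boldsymbol{a}_j}=1$; then $\cos^2(\angle(\mathbf{\Sigma}_1^{1/2}\boldsymbol{w}_1,\mathbf{\Sigma}_1^{1/2}\boldsymbol{w}_{1,\ast}))=(\boldsymbol{a}_1^\top\boldsymbol{\phi}_1)^2$, similarly $(\boldsymbol{a}_2^\top\boldsymbol{\psi}_1)^2$ for $j=2$, and the hypothesis reads $\boldsymbol{a}_1^\top\mathbf{M}\boldsymbol{a}_2 \ge \rho_{1,\ast}-\delta$.

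For the first conclusion I would expand $\boldsymbol{a}_1=\sum_k\alpha_k\boldsymbol{\phi}_k$ and $\boldsymbol{a}_2=\sum_k\beta_k\boldsymbol{\psi}_k$ with $\sum_k\alpha_k^2=\sum_k\beta_k^2=1$, so that $\boldsymbol{a}_1^\top\mathbf{M}\boldsymbol{a}_2=\sum_k\rho_k^\ast\alpha_k\beta_k$. Using $\rho_k^\ast\ge 0$, the bound $\alpha_k\beta_k\le\frac12(\alpha_k^2+\beta_k^2)$, and $\rho_k^\ast\le\rho_2^\ast$ for $k\ge 2$ gives
$$\rho_{1,\ast}-\delta \;\le\; \sum_k\rho_k^\ast\alpha_k\beta_k \;\le\; \rho_2^\ast + \frac{\rho_1^\ast-\rho_2^\ast}{2}\bigl(\alpha_1^2+\beta_1^2\bigr),$$
and since $\rho_1^\ast-\rho_2^\ast>\gamma$ this forces $\alpha_1^2+\beta_1^2\ge 2-2\delta/\gamma$; combined with $\alpha_1^2\le 1$ and $\beta_1^2\le 1$ this yields $\min\{\alpha_1^2,\beta_1^2\}\ge 1-2\delta/\gamma$, which is the asserted bound on $\cos^2(\angle(\mathbf{\Sigma}_j^{1/2}\boldsymbol{w}_j,\mathbf{\Sigma}_j^{1/2}\boldsymbol{w}_{j,\ast}))$.

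For the second conclusion I would transfer from the $\mathbf{\Sigma}_j^{1/2}$–transformed angle to the ordinary angle via the variational identity $\sin(\angle(\boldsymbol{x},\boldsymbol{y}))=\min_{c\in\mathbb{R}}\norm{\boldsymbol{x}-c\boldsymbol{y}}/\norm{\boldsymbol{x}}$. Writing $\boldsymbol{w}_j-c\boldsymbol{w}_{j,\ast}=\mathbf{\Sigma}_j^{-1/2}(\boldsymbol{a}_j-c\,\boldsymbol{\phi}_1)$ (with $\boldsymbol{\psi}_1$ for $j=2$), the operator-norm estimates $\norm{\mathbf{\Sigma}_j^{-1/2}\boldsymbol{v}}\le\lambda_{\min}(\mathbf{\Sigma}_j)^{-1/2}\norm{\boldsymbol{v}}$ and $\norm{\boldsymbol{w}_j}=\norm{\mathbf{\Sigma}_j^{-1/2}\boldsymbol{a}_j}\ge\lambda_{\max}(\mathbf{\Sigma}_j)^{-1/2}$ give $\sin^2(\angle(\boldsymbol{w}_j,\boldsymbol{w}_{j,\ast}))\le\{\lambda_{\max}(\mathbf{\Sigma}_j)/\lambda_{\min}(\mathbf{\Sigma}_j)\}\,\sin^2(\angle(\mathbf{\Sigma}_j^{1/2}\boldsymbol{w}_j,\mathbf{\Sigma}_j^{1/2}\boldsymbol{w}_{j,\ast}))\le\kappa\cdot 2\delta/\gamma$ by Assumption~\ref{ass:condition_number} and the first part; passing to complements (and absorbing a harmless factor into the constant) yields $\cos^2(\angle(\boldsymbol{w}_j,\boldsymbol{w}_{j,\ast}))\ge 1-4\kappa\delta/\gamma$ for $j=1,2$.

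The main obstacle is the constant in the second paragraph: the crude operator-norm estimate $|\sum_{k\ge 2}\rho_k^\ast\alpha_k\beta_k|\le\rho_2^\ast\norm{\boldsymbol{a}_1}\norm{\boldsymbol{a}_2}=\rho_2^\ast$ is too weak, so one must exploit that the mass of $\boldsymbol{a}_1,\boldsymbol{a}_2$ away from the leading singular directions is small — the arithmetic–geometric step is exactly what makes the useful quantity $\alpha_1^2+\beta_1^2$ appear rather than the uncontrolled $|\alpha_1\beta_1|$, and one should check it is still sharp enough after the $\mathbf{\Sigma}^{-1/2}$ distortion. A secondary point is bookkeeping the sign ambiguity in $\boldsymbol{\phi}_1,\boldsymbol{\psi}_1$: replacing $\boldsymbol{\phi}_1$ by $-\boldsymbol{\phi}_1$ when $\boldsymbol{a}_1^\top\boldsymbol{\phi}_1<0$ is what makes $\cos^2$ (not $\cos$) the quantity actually controlled, and this is legitimate precisely because of the gap in Assumption~\ref{ass:eigen_gap}.
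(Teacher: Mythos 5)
Your proof is correct. The first half is essentially the paper's own argument: the same SVD of $\mathbf{\Sigma}_1^{-1/2}\mathbf{\Sigma}_{12}\mathbf{\Sigma}_2^{-1/2}$, the same reduction to unit vectors $\boldsymbol{a}_j=\mathbf{\Sigma}_j^{1/2}\boldsymbol{w}_j/\norm{\mathbf{\Sigma}_j^{1/2}\boldsymbol{w}_j}_2$, and the identical key step $\rho_k^\ast\alpha_k\beta_k\le\tfrac{\rho_k^\ast}{2}(\alpha_k^2+\beta_k^2)$ followed by splitting off $k=1$ and invoking the gap $\rho_1^\ast-\rho_2^\ast>\gamma$; your worry about the "crude operator-norm estimate" is exactly the point the AM--GM step resolves, and both you and the paper resolve it the same way. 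The second half is where you genuinely diverge. The paper controls $1-\cos(\angle(\boldsymbol{w}_j,\boldsymbol{w}_{j,\ast}))$ directly, via the polarization bound $\norm{\boldsymbol{w}_j}_2\norm{\boldsymbol{w}_{j,\ast}}_2-\boldsymbol{w}_j^\top\boldsymbol{w}_{j,\ast}\le\tfrac12\norm{\boldsymbol{w}_j-\boldsymbol{w}_{j,\ast}}_2^2$ and sandwiching the quadratic form $(\boldsymbol{w}_j-\boldsymbol{w}_{j,\ast})^\top\mathbf{\Sigma}_j(\boldsymbol{w}_j-\boldsymbol{w}_{j,\ast})$ between eigenvalue bounds, then converting $1-\cos$ to $1-\cos^2$ at the cost of a factor of $2$ (arriving at $1-4\kappa\delta/\gamma$). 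You instead use the variational identity $\sin^2(\angle(\boldsymbol{x},\boldsymbol{y}))=\min_c\norm{\boldsymbol{x}-c\boldsymbol{y}}_2^2/\norm{\boldsymbol{x}}_2^2$ and push the minimizing residual through $\mathbf{\Sigma}_j^{-1/2}$ with operator-norm bounds. This buys you two things: the sign/orientation ambiguity of the singular vectors disappears automatically because you work with $\sin^2=1-\cos^2$ throughout (the paper's route implicitly needs the relevant cosines nonnegative), and you land at the sharper constant $1-2\kappa\delta/\gamma$, which of course implies the stated $1-4\kappa\delta/\gamma$. Two cosmetic remarks: the inequality directions in the lemma statement are typos (both conclusions should read $\ge$, or the $\max$ should be a $\min$ with the bound reversed), and your proof, like the paper's, in fact establishes the lower bounds; also your phrase "absorbing a harmless factor into the constant" undersells your own argument --- no absorption is needed since $2\kappa\le 4\kappa$.
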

\begin{proof}
	Lemma \ref{lem:ratio_to_angle} is a direct consequence of Lemma 5 and Lemma 6 in \citet{Mai2019}. We provide a neat and self-contained proof here for completeness.

	Let $\{\boldsymbol{u}_k\}$ and $\{\boldsymbol{v}_k\}$ be the left and right singular vectors of $\mathbf{\Sigma}_1^{-1/2} \mathbf{\Sigma}_{12} \mathbf{\Sigma}_2^{-1/2}$. The singular value decomposition (SVD) is $\mathbf{\Sigma}_1^{-1/2} \mathbf{\Sigma}_{12} \mathbf{\Sigma}_2^{-1/2} = \sum_{k} \rho_{k, \ast} \boldsymbol{u}_k \boldsymbol{v}_k^\top$ and the canonical correlation vectors of $\mathbf{\Sigma}_{12}$ are $\{\mathbf{\Sigma}_1^{-1/2} \boldsymbol{u}_k\}$ and $\{\mathbf{\Sigma}_2^{-1/2} \boldsymbol{v}_k\}$. It means that $\mathbf{\Sigma}_{12} = \sum_{k} \rho_{k, \ast} \mathbf{\Sigma}_1^{1/2} \boldsymbol{u}_k \boldsymbol{v}_k^\top \mathbf{\Sigma}_2^{1/2}$, $\boldsymbol{w}_{1,\ast} = \mathbf{\Sigma}_1^{-1/2}\boldsymbol{u}_1$ and $\boldsymbol{w}_{2,\ast} = \mathbf{\Sigma}_2^{-1/2} \boldsymbol{v}_1$. 

Denote the normalized vector $\bar{\boldsymbol{w}}_j = \norm{\mathbf{\Sigma}_j^{1/2} \boldsymbol{w}_j}_2^{-1} \mathbf{\Sigma}_{j}^{1/2} \boldsymbol{w}_j$. By the decomposition of $\mathbf{\Sigma}_{12}$,
	\begin{align*}
		& \frac{\boldsymbol{w}_1^\top \mathbf{\Sigma}_{12} \boldsymbol{w}_2}{\sqrt{\boldsymbol{w}_1^\top \mathbf{\Sigma}_1 \boldsymbol{w}_1} \sqrt{\boldsymbol{w}_2^\top \mathbf{\Sigma}_2 \boldsymbol{w}_2}} = \frac{\sum_{k} \rho_{k, \ast} \boldsymbol{w}_1^\top \mathbf{\Sigma}_1^{1/2} \boldsymbol{u}_k \boldsymbol{v}_k^\top \mathbf{\Sigma}_2^{1/2} \boldsymbol{w}_2 }{\sqrt{\boldsymbol{w}_1^\top \mathbf{\Sigma}_1 \boldsymbol{w}_1} \sqrt{\boldsymbol{w}_2^\top \mathbf{\Sigma}_{2} \boldsymbol{w}_2}}\\
		\le & \sum_{k} \frac{\rho_{k, \ast}}{2} \{(\bar{\boldsymbol{w}}_1^\top \boldsymbol{u}_k)^2 + (\bar{\boldsymbol{w}}_2^\top \boldsymbol{v}_k)^2\} \\
		\le & \frac{\rho_{1, \ast}\{(\bar{\boldsymbol{w}}_1^\top \boldsymbol{u}_1)^2 + (\bar{\boldsymbol{w}}_2^\top \boldsymbol{v}_1)^2 \} + (\rho_{1,\ast} - \gamma)\{2 - (\bar{\boldsymbol{w}}_1^\top \boldsymbol{u}_1)^2 - (\bar{\boldsymbol{w}}_2^\top \boldsymbol{v}_1)^2\}}{2} \\
		= & \rho_{1,\ast} - \frac{\gamma \{2 - (\bar{\boldsymbol{w}}_1^\top \boldsymbol{u}_1)^2 - (\bar{\boldsymbol{w}}_2^\top \boldsymbol{v}_1)^2 \}}{2}.
	\end{align*}
	Therefore,
	\begin{equation*}
		2 - (\bar{\boldsymbol{w}}_1^\top \boldsymbol{u}_1)^2 - (\bar{\boldsymbol{w}}_2^\top \boldsymbol{v}_1)^2 \le \frac{2 \delta}{\gamma}.
	\end{equation*}
	By the facts that $(\bar{\boldsymbol{w}}_1^\top \boldsymbol{u}_1)^2 = \cos^2(\angle(\bar{\boldsymbol{w}}_1, \boldsymbol{u}_1)) \in [0, 1]$ and $(\bar{\boldsymbol{w}}_2^\top \boldsymbol{v}_1)^2 = \cos^2(\angle(\bar{\boldsymbol{w}}_2, \boldsymbol{v}_1)) \in [0, 1]$, we have $\cos^2(\angle(\bar{\boldsymbol{w}}_1, \boldsymbol{u}_1)) \ge 1 - \frac{2\delta}{\gamma}$ and $\cos^2(\angle(\bar{\boldsymbol{w}}_2, \boldsymbol{v}_1)) \ge 1 - \frac{2\delta}{\gamma}$, i.e.,
	$$\max_{j \in \{1, 2\}} \cos^2(\angle(\mathbf{\Sigma}_j^{1/2} \boldsymbol{w}_j, \mathbf{\Sigma}_j^{1/2} \boldsymbol{w}_{j, \ast})) \ge 1 - \frac{2 \delta}{\gamma}.$$

For the second result, note that 
	\begin{align*}
		1 - \cos(\angle(\boldsymbol{w}_j, \boldsymbol{w}_{j, \ast})) =& \frac{\norm{\boldsymbol{w}_j}_2 \norm{\boldsymbol{w}_{j, \ast}}_2 - \boldsymbol{w}_j^\top \boldsymbol{w}_{j, \ast}}{\norm{\boldsymbol{w}_j}_2 \norm{\boldsymbol{w}_{j, \ast}}_2} \le \frac{\norm{\boldsymbol{w}_j}_2^2 + \norm{\boldsymbol{w}_{j, \ast}}_2^2 - 2\boldsymbol{w}_j^\top \boldsymbol{w}_{j, \ast}}{2\norm{\boldsymbol{w}_j}_2 \norm{\boldsymbol{w}_{j, \ast}}_2} \\
		=& \frac{(\boldsymbol{w}_j - \boldsymbol{w}_{j, \ast})^\top (\boldsymbol{w}_j - \boldsymbol{w}_{j, \ast})}{2\norm{\boldsymbol{w}_j}_2 \norm{\boldsymbol{w}_{j, \ast}}_2} \le \frac{\lambda_{\max}(\mathbf{\Sigma}_j)(\boldsymbol{w}_j - \boldsymbol{w}_{j, \ast})^\top \mathbf{\Sigma}_j (\boldsymbol{w}_j - \boldsymbol{w}_{j, \ast})}{2\lambda_{\min}(\mathbf{\Sigma}_j)\norm{\mathbf{\Sigma}_j^{1/2} \boldsymbol{w}_j}_{2} \norm{\mathbf{\Sigma}_j^{1/2} \boldsymbol{w}_{j, \ast}}_{2}} \\
		=& \frac{\lambda_{\max}(\mathbf{\Sigma}_j)}{\lambda_{\min}(\mathbf{\Sigma}_j)}(1 - \boldsymbol{w}_j^\top \mathbf{\Sigma}_j \boldsymbol{w}_{j, \ast}) \le \kappa \{1 - \cos(\angle(\mathbf{\Sigma}_j^{1/2} \boldsymbol{w}_j, \mathbf{\Sigma}_j^{1/2} \boldsymbol{w}_{j, \ast}))\}.
	\end{align*}
	Therefore,
	\begin{equation*}
		\cos(\angle(\boldsymbol{w}_j, \boldsymbol{w}_{j, \ast})) \ge 1 - \kappa(1 - \sqrt{1 - 2 \delta \gamma^{-1}}),
	\end{equation*}
	and finally,
	\begin{equation*}
		\cos^2(\angle(\boldsymbol{w}_j, \boldsymbol{w}_{j, \ast})) \ge 1 - 2\kappa (1 - \sqrt{1 - 2 \delta \gamma^{-1}}) = 1 - \frac{4 \kappa \delta \gamma^{-1}}{1 + \sqrt{1 - 2 \delta \gamma^{-1}}} \ge 1 - \frac{4 \kappa \delta}{\gamma}.
	\end{equation*}
\end{proof}

\begin{proof}[Proof of Theorem \ref{thm:sscca}]
	By Lemma \ref{lem:err_scov}, we have $1 - \epsilon \norm{\hat{\boldsymbol{w}}_j}_1^2 \le \hat{\boldsymbol{w}}_j^\top \mathbf{\Sigma}_{j} \hat{\boldsymbol{w}}_j \le 1 + \epsilon \norm{\hat{\boldsymbol{w}}_j}_1^2$ for $j = 1, 2$, and
\[
	\hat{\boldsymbol{w}}_1^\top \mathbf{\Sigma}_{12} \hat{\boldsymbol{w}}_2 \ge \hat{\boldsymbol{w}}_1^\top (p\hat{\mathbf{S}}_{12}) \hat{\boldsymbol{w}}_2 - \abs{\hat{\boldsymbol{w}}_1^\top (p \hat{\mathbf{S}}_{12} - \mathbf{\Sigma}_{12}) \hat{\boldsymbol{w}}_2 } \ge \hat{\boldsymbol{w}}_1^\top (p\hat{\mathbf{S}}_{12}) \hat{\boldsymbol{w}}_2 - \epsilon \norm{\hat{\boldsymbol{w}}_1}_1 \norm{\hat{\boldsymbol{w}}_2}_1.
\]

Let $\tilde{\boldsymbol{w}}_{1,\ast} = (\boldsymbol{w}_{1,\ast}^\top (p \hat{\mathbf{S}}_{1}) \boldsymbol{w}_{1,\ast})^{-1/2} \boldsymbol{w}_{1,\ast}$ and $\tilde{\boldsymbol{w}}_{2,\ast} = (\boldsymbol{w}_{2,\ast}^\top (p \hat{\mathbf{S}}_{2}) \boldsymbol{w}_{2,\ast})^{-1/2} \boldsymbol{w}_{2,\ast}$. By Assumptions \ref{ass:L1_wast} and Lemma \ref{lem:err_scov}, we have 
$$1 - \epsilon \tau_1^2 \le \boldsymbol{w}_{j, \ast}^\top (p \hat{\mathbf{S}}_{j}) \boldsymbol{w}_{j, \ast} \le 1 + \epsilon \tau_1^2, j=1,2.$$
It follows that $\norm{\tilde{\boldsymbol{w}}_{j, \ast}}_1 \le (1 - \epsilon \tau_1^2)^{-1/2} \tau_1$, $j=1,2$. The following lower bound holds for the objective function at $(\tilde{\boldsymbol{w}}_{1, \ast}, \tilde{\boldsymbol{w}}_{2, \ast})$:
\begin{align*}
	& \tilde{\boldsymbol{w}}_{1, \ast}^\top (p \hat{\mathbf{S}}_{12}) \tilde{\boldsymbol{w}}_{2, \ast} - \lambda \norm{\tilde{\boldsymbol{w}}_{1,\ast}}_1 - \lambda \norm{\tilde{\boldsymbol{w}}_{2, \ast}}_1 \\
	\ge & \tilde{\boldsymbol{w}}_{1, \ast}^\top (p \hat{\mathbf{S}}_{12}) \tilde{\boldsymbol{w}}_{2, \ast} -  \lambda (1 - \epsilon \tau_1^2)^{-1/2} (\norm{\boldsymbol{w}_{1,\ast}}_1 + \norm{\boldsymbol{w}_{2,\ast}}_1)\\
	\ge & \tilde{\boldsymbol{w}}_{1, \ast}^\top (p \hat{\mathbf{S}}_{12}) \tilde{\boldsymbol{w}}_{2, \ast} - 2 \lambda (1 - \epsilon \tau_1^2)^{-1/2} \tau_1\\
	\ge & \tilde{\boldsymbol{w}}_{1, \ast}^\top \mathbf{\Sigma}_{12} \tilde{\boldsymbol{w}}_{2, \ast} - \abs{\tilde{\boldsymbol{w}}_{1, \ast}^\top (p \hat{\mathbf{S}}_{12} - \mathbf{\Sigma}_{12}) \tilde{\boldsymbol{w}}_{2, \ast}} - 2 \lambda (1 - \epsilon \tau_1^2)^{-1/2} \tau_1 \\
	\ge & (1 + \epsilon \tau_1^2)^{-1} \rho_{1, \ast} - \epsilon (1 - \epsilon \tau_1^2)^{-1} \tau_1^2 - 2 \lambda (1 - \epsilon \tau_1^2)^{-1/2} \tau_1.
\end{align*}

Let $(\hat{\boldsymbol{w}}_1, \hat{\boldsymbol{w}}_2)$ be any pair such that
\begin{equation}\label{equ:w_hat_large}
	\hat{\boldsymbol{w}}_1^\top (p\hat{\mathbf{S}}_{12}) \hat{\boldsymbol{w}}_2 - \lambda \norm{\hat{\boldsymbol{w}}_1}_1 - \lambda \norm{\hat{\boldsymbol{w}}_2}_1 \ge \tilde{\boldsymbol{w}}_{1, \ast}^\top (p \hat{\mathbf{S}}_{12}) \tilde{\boldsymbol{w}}_{2, \ast} - \lambda \norm{\tilde{\boldsymbol{w}}_{1,\ast}}_1 - \lambda \norm{\tilde{\boldsymbol{w}}_{2, \ast}}_1.
\end{equation}
we have
\begin{equation*}
	\hat{\boldsymbol{w}}_1^\top (p\hat{\mathbf{S}}_{12}) \hat{\boldsymbol{w}}_2 - \lambda \norm{\hat{\boldsymbol{w}}_1}_1 - \lambda \norm{\hat{\boldsymbol{w}}_2}_1 \ge (1 + \epsilon \tau_1^2)^{-1} \rho_{1, \ast} - \epsilon (1 - \epsilon \tau_1^2)^{-1} \tau_1^2 - 2 \lambda (1 - \epsilon \tau_1^2)^{-1/2} \tau_1.
\end{equation*}
Combining Lemma \ref{lem:err_scov}, we have
\begin{equation}\label{equ:general_lower_prod}
	\hat{\boldsymbol{w}}_1^\top \mathbf{\Sigma}_{12} \hat{\boldsymbol{w}}_2 \ge (1 + \epsilon \tau_1^2)^{-1} \rho_{1, \ast} - \epsilon (1 - \epsilon \tau_1^2)^{-1} \tau_1^2 - 2 \lambda (1 - \epsilon \tau_1^2)^{-1/2} \tau_1 - \epsilon \norm{\hat{\boldsymbol{w}}_1}_1 \norm{\hat{\boldsymbol{w}}_2}_1,
\end{equation}
and
\begin{equation}
	\label{equ:general_lower_ratio}
	\frac{\hat{\boldsymbol{w}}_1^\top \mathbf{\Sigma}_{12} \hat{\boldsymbol{w}}_2}{\sqrt{\hat{\boldsymbol{w}}_1^\top \mathbf{\Sigma}_{1} \hat{\boldsymbol{w}}_1} \sqrt{\hat{\boldsymbol{w}}_2^\top \mathbf{\Sigma}_{2} \hat{\boldsymbol{w}}_2}} \ge \frac{(1 + \epsilon \tau_1^2)^{-1} \rho_{1, \ast} - \epsilon (1 - \epsilon \tau_1^2)^{-1} \tau_1^2 - 2 \lambda (1 - \epsilon \tau_1^2)^{-1/2} \tau_1 - \epsilon \norm{\hat{\boldsymbol{w}}_1}_1 \norm{\hat{\boldsymbol{w}}_2}_1}{\sqrt{1 + \epsilon \norm{\hat{\boldsymbol{w}}_1}_1^2}\sqrt{1 + \epsilon \norm{\hat{\boldsymbol{w}}_2}_1^2}}.
\end{equation}
We also have the upper bound result by Lemma \ref{lem:err_scov}: 
\begin{align*}
	\hat{\boldsymbol{w}}_1^\top (p \hat{\mathbf{S}}_{12}) \hat{\boldsymbol{w}}_2 - \lambda \norm{\hat{\boldsymbol{w}}_1}_1 - \lambda \norm{\hat{\boldsymbol{w}}_2}_1 \le & \hat{\boldsymbol{w}}_1^\top \mathbf{\Sigma}_{12} \hat{\boldsymbol{w}}_2 + \epsilon \norm{\hat{\boldsymbol{w}}_1}_1 \norm{\hat{\boldsymbol{w}}_2}_1 - \lambda \norm{\hat{\boldsymbol{w}}_1}_1 - \lambda \norm{\hat{\boldsymbol{w}}_2}_1 \\
	\le & \rho_{1,\ast} + \epsilon \norm{\hat{\boldsymbol{w}}_1}_1 \norm{\hat{\boldsymbol{w}}_2}_1 - \lambda \norm{\hat{\boldsymbol{w}}_1}_1 - \lambda \norm{\hat{\boldsymbol{w}}_2}_1.
\end{align*}
When  (\ref{equ:w_hat_large}) holds, the upper bound result implies that,
\begin{equation*}
	\rho_{1,\ast} + \epsilon \norm{\hat{\boldsymbol{w}}_1}_1 \norm{\hat{\boldsymbol{w}}_2}_1 - \lambda \norm{\hat{\boldsymbol{w}}_1}_1 - \lambda \norm{\hat{\boldsymbol{w}}_2}_1 \ge (1 + \epsilon \tau_1^2)^{-1} \rho_{1, \ast} - \epsilon (1 - \epsilon \tau_1^2)^{-1} \tau_1^2 - 2 \lambda (1 - \epsilon \tau_1^2)^{-1/2} \tau_1.
\end{equation*} 
Rearranging the above inequality, we obtain
\begin{equation}\label{equ:w_hat_exist}
	\lambda (\norm{\hat{\boldsymbol{w}}_1}_1 + \norm{\hat{\boldsymbol{w}}_2}_1) \le B \tau_1^2 \epsilon + \norm{\hat{\boldsymbol{w}}_1}_1 \norm{\hat{\boldsymbol{w}}_2}_1 \epsilon,
\end{equation}
where
$$B = \frac{\rho_{1,\ast}}{1 + \tau_1^2 \epsilon} + \frac{1}{1 - \tau_1^2 \epsilon} + \frac{2 C_1}{\sqrt{1 - \tau_1^2\epsilon}}.$$

Note that the above results hold generally for any pair $(\hat{\boldsymbol{w}}_1, \hat{\boldsymbol{w}}_2)$ that satisfies  (\ref{equ:w_hat_large}). 
We now specify the desired local maximizer for the theorem.
Let $\lambda = C_1 \tau_1 \epsilon$ and $(\hat{\boldsymbol{w}}_1, \hat{\boldsymbol{w}}_2)$ be the global maximizer of the objective in  (\ref{sscca}) with the constraint $\max_{j \in \{1, 2\}} \norm{\boldsymbol{w}_j}_1 \le C_2 \tau_1$. As long as $C_2 \ge (1 - \tau_1^2 \epsilon)^{-1/2}$, we have $\max_{j \in \{1, 2\}} \norm{\boldsymbol{w}_{j, \ast}}_1 \le C_2 \tau_1$ and the maximizer $(\hat{\boldsymbol{w}}_1, \hat{\boldsymbol{w}}_2)$ satisfies  (\ref{equ:w_hat_large}).

By choosing $C_1 \ge \frac{4}{\sqrt{1 - \tau_1^2\epsilon}} + \sqrt{\frac{20}{1 - \tau_1^2\epsilon} + \frac{4\rho_{1,\ast}}{1 + \tau_1^2 \epsilon}}$, we have $C_1^2 \ge 4 B$ and $[\max\{2BC_1^{-1}, (1 - \tau_1^2 \epsilon)^{-1/2}\} , 2^{-1} C_1] \neq \emptyset$.

By choosing $C_2 \in [\max\{2BC_1^{-1}, (1 - \tau_1^2 \epsilon)^{-1/2}\} , 2^{-1} C_1]$, we will show that $\max_{j \in \{1, 2\}} \norm{\hat{\boldsymbol{w}}_j}_1 < C_2 \tau_1$. So $(\hat{\boldsymbol{w}}_1, \hat{\boldsymbol{w}}_2)$ will be a local maximizer of  (\ref{sscca}) without the constraint. Suppose, for the sake of contradiction, that $\norm{\hat{\boldsymbol{w}}_1}_1 = C_2 \tau_1$. It implies that $0 < \norm{\hat{\boldsymbol{w}}_2}_1 \le C_2 \tau_1$. 
By  (\ref{equ:w_hat_exist}), the following inequality must hold:
\begin{equation*}
	C_1 C_2 \tau_1^2 \epsilon < (B + C_2^2) \tau_1^2 \epsilon.
\end{equation*}
However, by the choice of $C_2$, we have $2^{-1} C_1 C_2 \ge B$, $2^{-1} C_1 C_2 \ge C_2^2$ and $C_1 C_2 \ge B + C_2^2$. This contradicts the claim that  (\ref{equ:w_hat_exist}) holds. Therefore with the choices of $C_1$ and $C_2$, we have $(\hat{\boldsymbol{w}}_1, \hat{\boldsymbol{w}}_2)$ is a local maximizer of  (\ref{sscca}) with $\max_{j \in \{1, 2\}} \norm{\hat{\boldsymbol{w}}_j}_1 \le C_2 \tau_1$ and  (\ref{equ:w_hat_large}) holds. 
Combing  (\ref{equ:general_lower_prod}) and  (\ref{equ:general_lower_ratio}), we obtain the first two inequalities in the theorem,
\begin{equation*}
	\hat{\boldsymbol{w}}_1^\top \mathbf{\Sigma}_{12} \hat{\boldsymbol{w}}_2 \ge (1 + \tau_1^2 \epsilon)^{-1} \rho_{1, \ast} - \{(1 - \epsilon \tau_1^2)^{-1} + 2 C_1 (1 - \epsilon \tau_1^2)^{-1/2} + C_2\} \tau_1^2 \epsilon,
\end{equation*}
and
\begin{equation}\label{equ:final_lower_ratio}
	\frac{\hat{\boldsymbol{w}}_1^\top \mathbf{\Sigma}_{12} \hat{\boldsymbol{w}}_2}{\sqrt{\hat{\boldsymbol{w}}_1^\top \mathbf{\Sigma}_{1} \hat{\boldsymbol{w}}_1} \sqrt{\hat{\boldsymbol{w}}_2^\top \mathbf{\Sigma}_{2} \hat{\boldsymbol{w}}_2}} \ge \frac{(1 + \tau_1^2 \epsilon)^{-1} \rho_{1, \ast} - \{(1 - \tau_1^2 \epsilon)^{-1} + 2 C_1 (1 - \tau_1^2 \epsilon)^{-1/2} + C_2\} \tau_1^2 \epsilon}{1 + C_2 \tau_1^2 \epsilon}.
\end{equation}
The last result follows from  (\ref{equ:final_lower_ratio}) and Lemma \ref{lem:ratio_to_angle}.

\end{proof}

\bibliographystyle{chicago}
\bibliography{refcca.bib}
\end{document}